\documentclass[conference]{IEEEtran}
\pdfoutput=1
\parskip 1.5mm

\usepackage{subfigure}
\usepackage{multirow}
\usepackage{graphicx}
\usepackage{amsmath}
\usepackage{color}
\usepackage{amsfonts}
\usepackage{amssymb}
\usepackage{amsthm}
\usepackage{algorithm}
\usepackage{algorithmic}

\newtheorem{lem}{Lemma}
\newtheorem{prop}{Proposition}

\newcommand{\argmax}{\mathop{\text{argmax}}}
\newcommand{\argmin}{\mathop{\text{argmin}}}

\newcommand{\vv}{{\bf v}}
\newcommand{\vw}{{\bf w}}
\newcommand{\vx}{{\bf x}}
\newcommand{\vy}{{\bf y}}
\newcommand{\vz}{{\bf z}}
\newcommand{\vp}{{\bf p}}

\newcommand{\mj}{{\bf J}}
\newcommand{\mh}{{\bf H}}
\newcommand{\mi}{{\bf I}}

\newcommand{\mx}{{\bf X}}
\newcommand{\my}{{\bf Y}}

\newcommand{\mw}{{\bf W}}

\newcommand{\mje}{\hat{\bf J}}
\newcommand{\vze}{\hat{\bf z}}

\newcommand{\tj}{\widetilde{J}}
\newcommand{\tmj}{\widetilde{\mj}}
\newcommand{\tvw}{\widetilde{\vw}}

\newcommand{\E}{{\mathbb E}}

\textwidth 7.0in
\textheight 9.35in
\topmargin -1.5cm

\newcommand{\Define}{\triangleq}

\headsep 1 cm
\oddsidemargin -0.25in
\evensidemargin -0.25in
\parindent 0.0mm
\pagestyle{plain}

\begin{document}
\twocolumn

\title{{\huge 
Channel Hardening-Exploiting Message Passing (CHEMP) Receiver
in Large-Scale MIMO Systems
}}

\author{T. Lakshmi Narasimhan and A. Chockalingam \\
Department of ECE, Indian Institute of Science, Bangalore}

\IEEEaftertitletext{\vspace{-0.6\baselineskip}}
\maketitle
\begin{abstract}
In this paper, we propose a multiple-input multiple-output (MIMO) receiver
algorithm that exploits {\em channel hardening} that occurs in large MIMO
channels. Channel hardening refers to the phenomenon where the off-diagonal
terms of the ${\bf H}^H{\bf H}$ matrix become increasingly weaker compared
to the diagonal terms as the size of the channel gain matrix ${\bf H}$
increases. Specifically, we propose a message passing detection (MPD) 
algorithm which works with the real-valued matched filtered received vector 
(whose signal term becomes ${\bf H}^T{\bf H}{\bf x}$, where ${\bf x}$ is the
transmitted vector), and uses a Gaussian approximation on the off-diagonal
terms of the  ${\bf H}^T{\bf H}$ matrix. We also propose a simple estimation
scheme which directly obtains an estimate of ${\bf H}^T{\bf H}$ (instead of 
an estimate of  ${\bf H}$), which is used as an effective channel estimate 
in the MPD algorithm. We refer to this receiver as the {\em channel 
hardening-exploiting message passing (CHEMP)} receiver. The proposed CHEMP 
receiver achieves very good performance in large-scale MIMO systems (e.g., in
systems with 16 to 128 uplink users and 128 base station antennas). For the 
considered large MIMO settings, the complexity of the proposed MPD algorithm 
is almost the same as or less than that of the minimum mean square error (MMSE) 
detection. This is because the MPD algorithm does not need a matrix inversion. 
It also achieves a significantly better performance compared to MMSE and other 
message passing detection algorithms using MMSE estimate of ${\bf H}$. We also 
present a convergence analysis of the proposed MPD algorithm. Further, we 
design optimized irregular low density parity check (LDPC) codes specific to 
the considered large MIMO channel and the CHEMP receiver through EXIT chart 
matching. The LDPC codes thus obtained achieve improved coded bit error rate 
performance compared to off-the-shelf irregular LDPC codes.
\end{abstract}
{\em {\bfseries Keywords}} -- 
{\footnotesize {\em \small 
Large-scale MIMO systems, channel hardening, message passing, detection,
channel estimation, decoding.
}}

\section{Introduction}
\label{sec1}
Wireless communication systems using multiple-input multiple-output (MIMO)
configurations with a large number of antennas have attracted a lot of
research attention \cite{lmimo1},\cite{lmimo2},\cite{lmimo3},\cite{scale}. 
These systems can achieve high spectral and power efficiencies. An emerging 
architecture for large-scale multiuser MIMO communications is one where each 
base station (BS) is equipped with a large number of antennas and the user 
terminals are equipped with one antenna each. A key requirement on the uplink 
(user terminal to BS link) in such large-scale MIMO systems is to achieve 
reduced channel estimation, detection and decoding complexities at the BS 
receiver to enable practical implementation, while maintaining good 
performance. When the number of BS antennas is much larger than the number 
of uplink users (i.e., low system loading factors), linear detectors like 
the minimum mean square error (MMSE) detector are good in terms of both 
complexity and performance \cite{mmse1}. In the recent years, several low 
complexity detection algorithms which achieve near-optimal performance in 
large dimensions using complexities comparable to that of MMSE detection 
have been proposed \cite{lmimo1},\cite{lmimo2},\cite{las1}-\cite{heuris1}.
These algorithms are based on local search \big(e.g., likelihood ascent 
search (LAS) algorithm and variants in 
\cite{lmimo1},\cite{lmimo2},\cite{las1},\cite{las2}\big), 
meta-heuristics \big(e.g., reactive tabu search (RTS) and variants in 
\cite{rts1},\cite{rts2}\big), message passing techniques \big(e.g.,
belief propagation (BP) based algorithms in \cite{jstsp},\cite{gta}\big),
lattice reduction techniques \big(e.g., lattice reduction (LR) aided 
detectors in \cite{lattice1},\cite{lattice2}\big), and Monte-Carlo 
sampling techniques \big(e.g., Markov chain Monte Carlo (MCMC) algorithms 
in \cite{mcmc1}\big). Issues related channel estimation and low density 
parity check codes for large-scale MIMO systems are also being addressed 
\cite{cest1},\cite{comm}. 

Message passing on graphical models is a promising low-complexity 
high-performance approach for signal processing in large dimensions 
\cite{frey}. Decoding of turbo codes and LDPC codes, and 
equalization/detection \cite{bp1}-\cite{bickson} are popular 
examples of the use of message passing algorithms in communications. 
In \cite{jstsp}, a MIMO detection algorithm based on approximate 
message passing on a factor graph is presented. The message passing
algorithm in \cite{gta} uses a different approach. It obtains a tree 
that approximates the fully-connected MIMO graph and performs message 
passing on this tree. 

In this this paper, we propose a promising low-complexity receiver 
for large-scale MIMO systems. The receiver is based on  message passing. 
The novelty in the proposed receiver lies in the exploitation of the  
`channel hardening' phenomenon that occurs in large MIMO channels
\cite{chard1},\cite{tse},\cite{chard2},\cite{chard3}.
Channel hardening refers to the phenomenon where the off-diagonal
terms of the ${\bf H}^T{\bf H}$ matrix become increasingly weaker compared
to the diagonal terms as the size of the channel gain matrix ${\bf H}$
increases. We exploit this for the purposes of detection and channel
estimation. The proposed receiver, referred to as the {\em channel 
hardening-exploiting message passing (CHEMP)} receiver, consists of two 
components; a message passing detection (MPD) algorithm and an estimation 
scheme to obtain an estimate of $\mh^T\mh$. The highlights of our 
contributions in this paper can be summarized as follows:
\begin{itemize}
\vspace{-3mm}
\item
proposal of the MPD algorithm which works with the real-valued matched 
filtered received vector, 
and uses a Gaussian approximation on the off-diagonal terms of the  
${\bf H}^T{\bf H}$ matrix.
\item 
proposal of a simple estimation scheme which directly obtains an 
estimate of ${\bf H}^T{\bf H}$ (instead of an estimate of  ${\bf H}$), 
which is used as an effective channel estimate in the MPD algorithm. 
\item
less than the MMSE detection complexity (because matrix inversion is 
not needed in the MPD algorithm).
\item significantly 
better performance compared to MMSE and other message passing detection 
algorithms which use MMSE estimate of ${\bf H}$. 
\item 
convergence analysis of the MPD algorithm which proves the existence
of a fixed point in the MPD algorithm.
\item
analysis of the mean square difference of the log-likelihood ratios (LLRs)
in the proposed receiver with perfect and estimated channel state information
(CSI).
\item 
design of optimized irregular LDPC codes specific to the considered large 
MIMO channel and the CHEMP receiver through EXIT chart matching. 
\end{itemize}

The rest of the paper is organized as follows. The system model and the
channel hardening phenomenon are described in Section \ref{sec2}. The 
proposed CHEMP receiver, and its performance and complexity are presented 
in Section \ref{sec3}. An analysis of the CHEMP receiver is presented in
Section \ref{sec4}.  Section \ref{subsec_new} presents an
extension to higher-order QAM. The design and performance of LDPC codes 
matched to the large MIMO channel and the CHEMP receiver are presented in 
Section \ref{sec5}. Conclusions are presented in Section \ref{sec6}.

\section{System Model}
\label{sec2}
Consider a large-scale multiuser MIMO system where $K$ uplink users, each
transmitting with a single antenna, communicate with a BS having a large
number of receive antennas. Let $N$ denote the number of BS antennas; $N$
is in the range of tens to hundreds. The ratio $\alpha=K/N$ is the system
loading factor. We consider $\alpha \leq 1$ (i.e., $K \leq N$). The system 
model is illustrated in Fig. \ref{mumimo}. Each user encodes a sequence of 
$k$ information bits to a sequence of $n$ coded symbols using an LDPC code 
of code rate $R=k/n$. The encoded bits are modulated and transmitted. 
 Let ${\mathbb A}$ denote the modulation alphabet. The 
transmission of one LDPC code block requires $n/(\log_2|{\mathbb A}|)$ 
channel uses. 

\begin{figure}
\includegraphics[width=3.35in,height=2.55in]{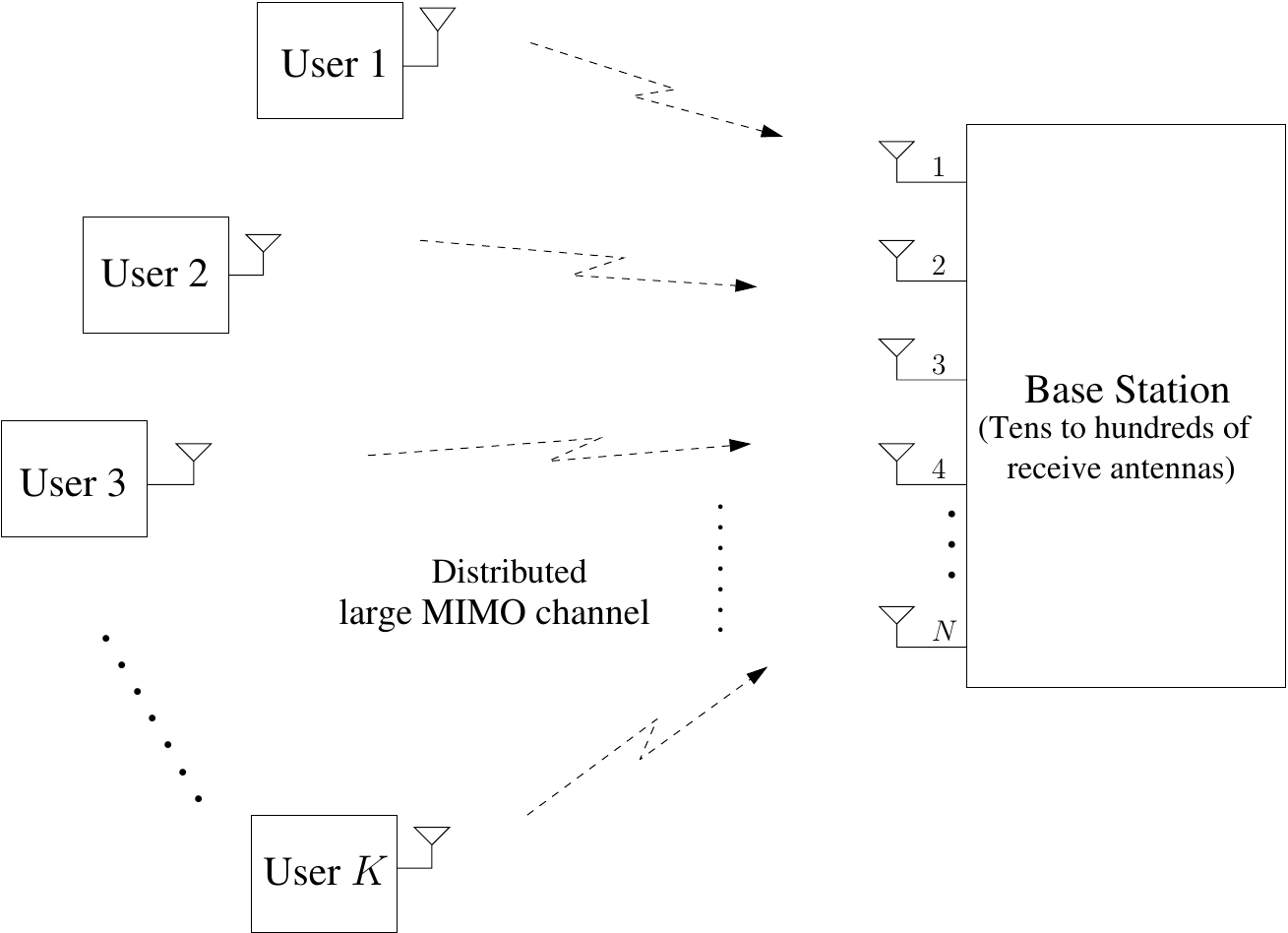} 
\caption{Large-scale multiuser MIMO system model on the uplink.}
\label{mumimo} 
\end{figure}

Let ${\bf H}_c^{(t)} \in \mathbb{C}^{N\times K}$ denote the channel gain 
matrix in the $t$th channel use and $H_{ij}^c$ denote the complex channel 
gain from the $j$th user to the $i$th BS antenna. The channel gains 
$H_{ij}^c$s are assumed to be independent Gaussian with zero mean and 
variance $\sigma_j^2$, such that $\sum_j \sigma_j^2=K$. The $\sigma_j^2$ 
models the imbalance in the received power from user $j$ due to path loss 
etc., and $\sigma_j^2=1$ corresponds to the case of perfect power control. 
Let ${\bf x}_c^{(t)} \in {\mathbb A}^K$ denote the modulated 
symbol vector transmitted in the $t$th channel use, where the $j$th element 
of ${\bf x}_c^{(t)}$ denotes the modulation symbol transmitted by the $j$th 
user. Assuming perfect synchronization, the received vector at the BS in the 
$t$th channel use, ${\bf y}_c^{(t)}$, is given by
\begin{eqnarray}
{\bf y}_c^{(t)} & = & {\bf H}_c^{(t)}{\bf x}_c^{(t)} + {\bf w}_c^{(t)},
\label{csys}
\end{eqnarray}
where ${\bf w}_c^{(t)}$ is the noise vector. Dropping the channel use index 
for convenience, (\ref{csys}) can be written in the real domain as
\begin{equation}
{\bf y} = {\bf Hx} + {\bf w},
\label{sys}
\end{equation}
where
\begin{equation}
{\bf H} \Define \left[\begin{array}{cc}\Re({\bf H}_c) \hspace{2mm} -\Im({\bf H}_c) \\
\Im({\bf H}_c)  \hspace{5mm} \Re({\bf H}_c) \end{array}\right], \nonumber
\end{equation}
\begin{equation}
{\bf y} \Define
\left[\begin{array}{c} \Re({\bf y}_c) \\ \Im({\bf y}_c) \end{array}\right], \,
{\bf x} \Define
\left[\begin{array}{c} \Re({\bf x}_c) \\ \Im({\bf x}_c) \end{array}\right], \, 
{\bf w} \Define
\left[\begin{array}{c} \Re({\bf w}_c) \\ \Im({\bf w}_c) \end{array}\right],
\nonumber
\end{equation}
$\Re(.)$ and $\Im(.)$ denote the real and imaginary parts, respectively.
Note that $\mh \in \mathbb{R}^{2N\times 2K}$, $\vy \in \mathbb{R}^{2N}$,
$\vw\in \mathbb{R}^{2N}$, and  $\vx\in{\mathbb R}^{2K}$. 
For a QAM alphabet ${\mathbb A}$, the elements of ${\bf x}$ will take 
values from the underlying PAM alphabet ${\mathbb B}$, i.e., 
${\bf x} \in {\mathbb B}^{2K}$.
The elements of $\vw$ are modeled as i.i.d. $\mathcal{N}(0,\sigma^2_n)$.
The average received SNR per receive antenna is given by
$\gamma=\frac{KE_s}{2\sigma_n^2}$, where $E_s$ is the average energy
of the transmitted symbols.
For the real-valued system model in (\ref{sys}), the maximum-likelihood (ML) 
detection rule is given by
\begin{equation}
\label{ml}
\hat{\vx}=\argmin_{\vx\in{\mathbb B}^{2K}} \ (\vy-\mh\vx)^T(\vy-\mh\vx).
\end{equation}
When the transmitted bits are equally likely, then the ML decision
rule is same as the maximum a posteriori  probability (MAP) decision 
rule, given by
\begin{eqnarray}
\label{map}
\hat{\vx}=\argmax_{\vx\in {\mathbb B}^{2K}} \ \Pr(\vx\mid\vy,\mh). 
\end{eqnarray}

The exact computation of (\ref{ml}) and (\ref{map}) requires exponential 
complexity in $K$. Message passing algorithms can provide approximate 
marginalization of the joint distribution in (\ref{map}) at low complexities. 
In Section \ref{sec3}, we propose such a message passing algorithm, whose 
novelty lies in exploiting the channel hardening phenomenon that happens in 
large MIMO channels. The channel hardening effect in large MIMO channels is 
described in the following subsection. 

\subsection{Channel hardening in large MIMO channels}
\label{sec2a}
Channel hardening refers to the phenomenon where the variance of the mutual
information of the MIMO channel grows very slowly relative to its mean or 
even shrink as the number of antennas grows \cite{chard1}. Consider a 
$n_r\times n_t$ MIMO channel. As $n_r$ and $n_t$ are increased keeping their 
ratio fixed, the distribution of the singular values of the MIMO channel 
matrix becomes less sensitive to the actual distribution of the entries of 
the channel matrix (as long as the entries are i.i.d.) \cite{tse}. This is 
a result of the Mar\v{c}enko-Pastur law \cite{chard2}, which states that if 
the entries of a $n_r\times n_t$ matrix $\mh$ are zero mean i.i.d. with 
variance $1/n_r$, then the empirical distribution of the eigenvalues of 
$\mh^H\mh$ converges almost surely, as 
$n_r,n_t \rightarrow \infty$ with $n_t/n_r = \alpha$, to 
a density function \cite{chard3}
\begin{equation}
p_{\alpha}(x) = \Big(1-\frac{1}{\alpha}\Big)^+ \delta(x) + \frac{\sqrt{(x-a)^+(b-x)^+}}{2\pi\alpha x},
\label{marc}
\end{equation}
where $(x)^+= \max(x,0)$, $a=(1-\sqrt{\alpha})^2$, and $b=(1+\sqrt{\alpha})^2.$
An effect of the Mar\v{c}enko-Pastur law is that very tall or very wide 
matrices\footnote{In practice, the channel matrix in a multiuser system with 
tens of single-antenna users and hundreds of BS antennas will become a very 
tall matrix on the uplink, and a very wide matrix on the downlink.} are very 
well conditioned. The law also implies that the channel ``hardens'', i.e., 
the eigenvalue histogram of a single realization converges to the average 
asymptotic eigenvalue distribution.  

Channel hardening can bring in several advantages in large dimensional 
signal processing. For example, linear detection in large systems will 
require inversion of large matrices. Inversion of large random matrices 
can be done fast using series expansion techniques 
\cite{invert1},\cite{invert2},\cite{invert3}. 
Because of channel hardening, approximate matrix inversions using series 
expansion and deterministic approximations from limiting distribution 
become effective in large dimensions.

An interesting aspect in channel hardening is that as the size of $\mh$ 
increases, the off-diagonal terms of the $\mh^H\mh$ matrix become increasingly 
weaker compared to the diagonal terms, i.e., 
$\frac{\mh^H\mh}{n_r}\rightarrow\mi_{n_t}$  for $n_r,n_t \rightarrow \infty$ 
with $n_t/n_r=\alpha$. This phenomenon is pictorially illustrated in Fig. 
\ref{hardening}, where we have plotted $\mh^T\mh$ for the 
real-valued channel model in (\ref{sys}) for $8\times 8 $, $32\times 32$,
$64\times 64$, and $128\times 128$ channels. In proposing the new receiver
algorithm in the next section, we will work with approximations to the 
off-diagonal terms of the $\mh^T\mh$ matrix and estimates of $\mh^T\mh$, 
which are found to achieve very good performance in large dimensions at 
low complexities.

\begin{figure}
\hspace{-1mm}
\includegraphics[width=3.65in,height=3.5in]{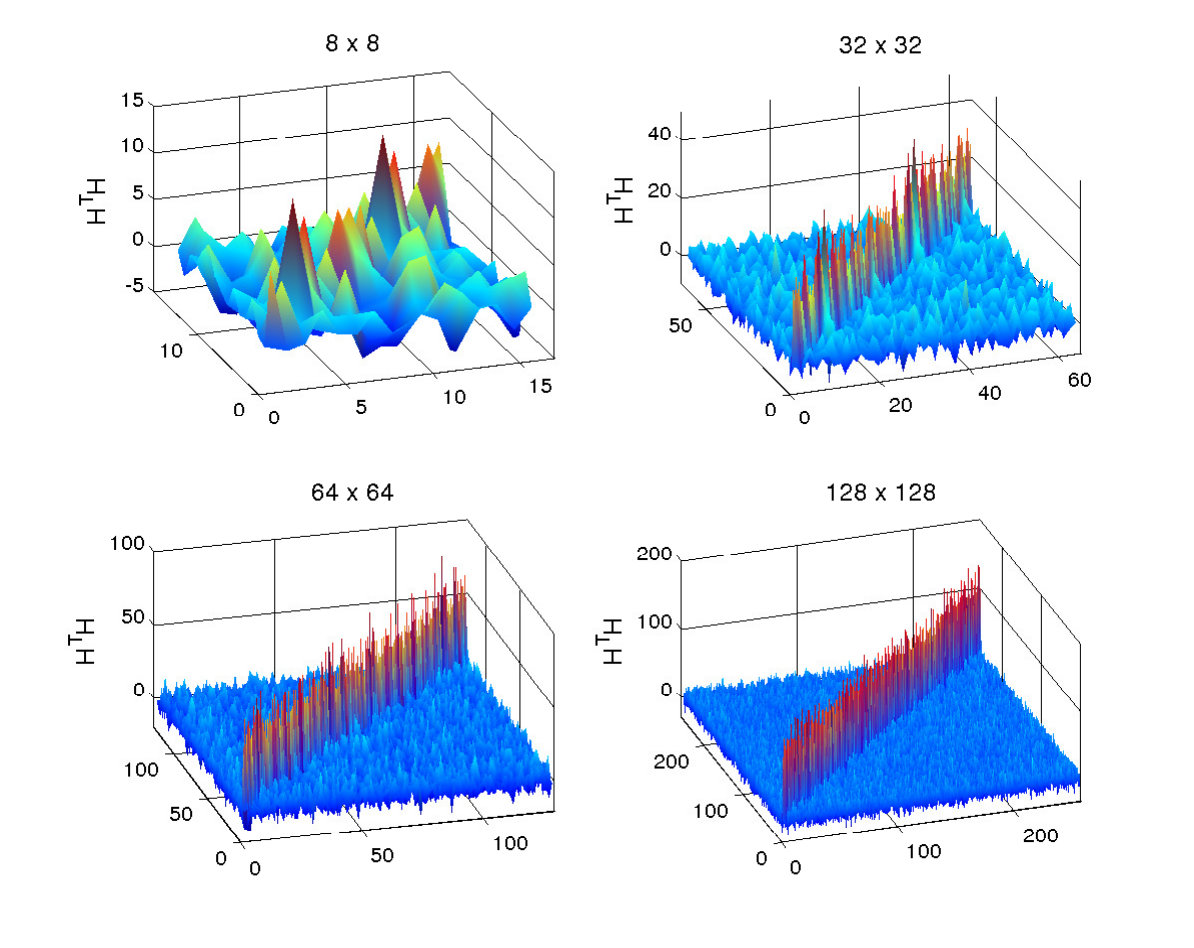}
\vspace{-6mm}
\caption{ \label{hardening}
Magnitude plots of ${\bf H}^T{\bf H}$ for $8\times 8$, $32\times 32$, 
$64\times 64$, and $128\times 128$ MIMO channels.}
\end{figure}

\section{The Proposed CHEMP Receiver }
\label{sec3} 
In this section, we present the proposed CHEMP receiver. The 
proposed CHEMP receiver has two main components: $1)$ a message passing 
based detection (MPD) algorithm, and $2)$ a scheme to estimate $\mh^T\mh$. 
The proposed MPD algorithm works with the real-valued matched filtered 
received vector (whose signal term becomes ${\bf H}^T{\bf H}{\bf x}$), 
and uses a Gaussian approximation on the off-diagonal terms of the  
${\bf H}^T{\bf H}$ matrix.

Before we describe the proposed MPD algorithm, we state the following lemma 
which will be used in the development and analysis of the detection algorithm.
\begin{lem}
\label{lemma1}
Let $X_i$ and $Y_i$ be Gaussian random variables with zero mean and variance 
$\sigma^2_x$ and $\sigma^2_y$, respectively. Let $Z_i\Define X_iY_i$ and
$Z\Define\frac{1}{n}\sum\limits_{i=1}^nZ_i$. 
\begin{itemize}
\item 	When $X_i$ and $Y_i$ are independent, $\E Z_i=0$ and 
	$\E Z_i^2=\sigma^2_x\sigma^2_y$. Then by central limit theorem,
	for large $n$, $Z\sim\mathcal{N}(0,\frac{\sigma^2_x\sigma^2_y}{n})$. 
	When $X_i$ and $Y_i$ are i.i.d., 
	$Z\sim\mathcal{N}(0,\frac{\sigma^4_x}{n})$.
\item 	When $X_i=Y_i$, $Z$ is a $\chi^2$ random variable of degree $n$. 
	$\E Z=\sigma^2_x$ and Var$(Z)=\frac{2\sigma^4_x}{n}$. \qedsymbol
\end{itemize} 
\end{lem}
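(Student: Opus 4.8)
The plan is to treat the two bullets separately, since each reduces to a standard fact about Gaussians combined with a limit theorem, and essentially nothing beyond elementary moment computations is needed.

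For the first bullet, I would start from $Z_i \Define X_i Y_i$ and use independence of $X_i$ and $Y_i$ to get $\E Z_i = \E X_i \, \E Y_i = 0$ and $\E Z_i^2 = \E X_i^2 \, \E Y_i^2 = \sigma_x^2 \sigma_y^2$, so $\mathrm{Var}(Z_i) = \sigma_x^2 \sigma_y^2 < \infty$. If the pairs $(X_i, Y_i)$ are taken independent across $i$, then the $Z_i$ are i.i.d.\ with zero mean and finite variance, so the classical central limit theorem applied to $\sqrt{n}\, Z = \frac{1}{\sqrt{n}} \sum_i Z_i$ gives, for large $n$, $Z \sim \mathcal{N}\!\left(0, \frac{\sigma_x^2 \sigma_y^2}{n}\right)$. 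Setting $\sigma_y^2 = \sigma_x^2$ (the i.i.d.\ case) immediately yields $Z \sim \mathcal{N}(0, \sigma_x^4/n)$.

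For the second bullet, with $X_i = Y_i$ we have $Z = \frac{1}{n} \sum_i X_i^2$. Writing $U_i \Define X_i/\sigma_x \sim \mathcal{N}(0,1)$, the sum $\sum_i U_i^2 = \frac{n}{\sigma_x^2} Z$ is, by definition, a $\chi^2$ random variable with $n$ degrees of freedom (a sum of $n$ squared independent standard normals). Using the known moments $\E[\chi^2_n] = n$ and $\mathrm{Var}(\chi^2_n) = 2n$, I would divide through by the scaling factor $n/\sigma_x^2$ to obtain $\E Z = \sigma_x^2$ and $\mathrm{Var}(Z) = \frac{\sigma_x^4}{n^2} \cdot 2n = \frac{2\sigma_x^4}{n}$.

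There is no genuine obstacle here; the only points requiring a little care are (i) stating explicitly that the collection $\{(X_i, Y_i)\}$ is independent across $i$ so that the $Z_i$ are i.i.d.\ and the CLT applies, and (ii) noting that it is the normalized quantity $nZ/\sigma_x^2$, rather than $Z$ itself, that is exactly $\chi^2_n$, with the moment formulas following by the appropriate scaling. If a non-asymptotic version of the first bullet were desired, one could instead invoke a Berry--Esseen bound (the $Z_i$ have finite third absolute moment since $X_i, Y_i$ are Gaussian), but the Gaussian approximation as stated suffices for its use later in the paper.
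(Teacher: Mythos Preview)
Your proposal is correct and follows essentially the same approach as the paper, which in fact does not give a separate proof at all: the lemma statement itself simply invokes the central limit theorem for the first bullet and the definition of the $\chi^2$ distribution for the second, and then places a \qedsymbol. Your write-up is actually more careful than the paper on two points---making explicit the independence across $i$ needed for the CLT, and noting that it is $nZ/\sigma_x^2$ (not $Z$ itself) that is exactly $\chi^2_n$---both of which are worthwhile clarifications.
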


\subsection{Proposed MPD algorithm}
\label{sec3a}
Consider the real-valued system model in (\ref{sys}).
 We consider 
4-QAM modulation in this section, i.e., 
${\mathbb B}=\{\pm 1 \}$. We will extend the algorithm to higher-order
QAM in Section \ref{subsec_new}.
Performing matched filter operation on ${\bf y}$, we have
\begin{equation}
\mh^T\vy = \mh^T\mh\vx + \mh^T\vw.
\label{mfeq}
\end{equation}
From (\ref{mfeq}), we write the following:
\begin{equation}
\vz =\mj\vx+\vv,
\label{eqn1}
\end{equation}
where
\begin{equation}
\vz \Define \frac{\mh^T\vy}{N}, \quad
\mj\Define\frac{\mh^T\mh}{N}, \quad \vv\Define\frac{\mh^T\vw}{N}.
\label{def1}
\end{equation}
The $i$th element of $\vz$ can be written as
\begin{equation}
z_i=J_{ii}x_i+\underbrace{\sum_{j=1, j\neq i}^{2K} J_{ij}x_j + v_i}_{\Define \ g_i},
\label{eqn2}
\end{equation}
where $J_{ij}$ is the element in the $i$th row and $j$th column of 
$\mj$, $x_i$ is the $i$th element of $\vx$, and 
\begin{equation}
v_i=\sum_{j=1}^{2N}\frac{H_{ji}w_j}{N}
\end{equation}
is the $i$th element of $\vv$, where $H_{ji}$ is the $(j,i)$th element of
${\bf H}$. Note that the variable $g_i$ defined in
(\ref{eqn2}) denotes the interference-plus-noise term, which involves the 
off-diagonal elements of $\frac{\mh^T\mh}{N}$ (i.e., $J_{ij}$, $i\neq j$).
We approximate the $g_i$ term to have a a Gaussian distribution with mean 
$\mu_i$ and variance $\sigma^2_i$, i.e., the distribution of $g_i$ is 
approximated as $\mathcal{N}(\mu_i,\sigma^2_i)$. By central limit theorem, 
this approximation is accurate for large $K$, $N$. The mean and variance 
in this approximation are given by
\begin{eqnarray}
\mu_i&=&\E(g_i)=\sum_{j=1, j\neq i}^{2K} J_{ij} \E(x_j) \label{eq_mu} \\
\sigma^2_i&=&\text{Var}(g_i)=\sum_{j=1, j\neq i}^{2K} J_{ij}^2 \text{Var}(x_j) + \sigma^2_{v}.
\label{musigma}
\end{eqnarray}
Denoting the probability of the symbol $x_j$ as $p_j$, we have
\begin{eqnarray}
\E(x_j) = (2p_j-1), \quad \text{Var}(x_j)= 4 p_j(1-p_j). 
\end{eqnarray}
Also, note that by Lemma \ref{lemma1}, $\sigma^2_{v}=\frac{\sigma^2_n}{2N}$.
Because of the above Gaussian approximation, the a posteriori probability 
(APP) of the symbol $x_i$ can be written as
\begin{eqnarray}
\hspace{-4mm}
p_i\ = \ \Pr(x_i|z_i, \mj) & \hspace{-2mm} \propto& \hspace{-2mm} \exp\Big(\frac{-1}{2\sigma^2_i}(z_i-J_{ii}x_i-\mu_i)^2\Big).
\label{probs}
\end{eqnarray}
From (\ref{probs}), the log-likelihood ratio (LLR) of $x_i$, denoted by
$L_i$, can be written as
\begin{eqnarray}
L_i & = & \ln\frac{\Pr(z_i|x_i=+1)}{\Pr(z_i|x_i=-1)} \nonumber \\
    & = &\frac{2J_{ii}}{\sigma^2_i}(z_i-\mu_i). 
\label{llreq}
\end{eqnarray}
From (\ref{llreq}), 
the probability of symbol $x_i$, can be written as
\begin{eqnarray}
p_i&=&\frac{e^{L_i}}{1+e^{L_i}}.
\label{pix}
\end{eqnarray}

\begin{figure}
\includegraphics[width=3.25in,height=1.90in]{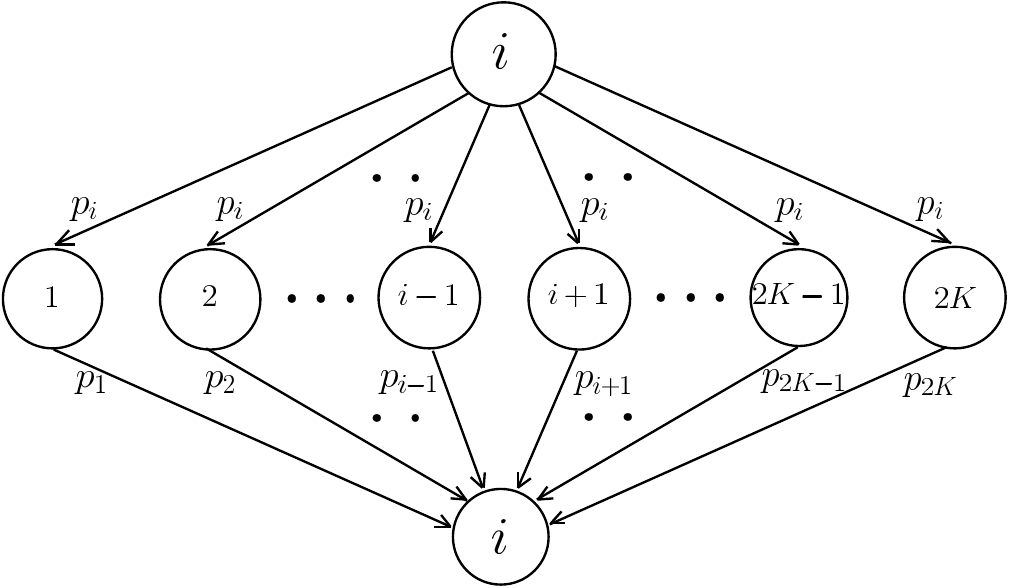} 
\caption{Message passing in the proposed MPD algorithm.}
\label{msgpass} 
\end{figure}

{\em Message passing:}
The system is modeled as a fully-connected graph, where the data symbols in 
$\vx$ represent the nodes. 
There are $2K$ nodes in the graph corresponding to the $2K$ elements in 
the vector $\vx$. The $i$th node uses the knowledge of $\mj$, $\vz$ and 
the incoming APPs $\{p_1,p_2,\cdots,p_{i-1},p_{i+1},\cdots,p_{2K}\}$ 
to obtain a soft estimate of the interference to symbol $x_i$, and 
computes its APP, $p_i$. That is, each node is an approximate APP 
processor for its associated symbol, and message passing refers to the 
exchange of APP values computed at each iteration. 
Figure \ref{msgpass} illustrates the above message passing schedule. 
Note that the computation of the message $p_i$ in (\ref{pix}) requires 
the computation of (\ref{eq_mu}), (\ref{musigma}) and (\ref{llreq}). 
The algorithm is 
initialized with $p_i=0.5$, $\forall i$, and message passing is carried 
out for a certain number of iterations, after which the algorithm stops. 
The values of $p_i$s at the end are taken 
as the soft values of $x_i$s. These soft values can be directly fed to the 
channel decoder in coded systems. In uncoded systems, a hard estimate of 
symbol $x_i$ can be obtained as
\begin{equation}
\hat{x}_i=\left\{ 
\begin{array}{ll}
+1 & \mbox{if } p_i \geq 0.5\\
-1 & \mbox{otherwise.}
\end{array} \right.
\label{bits}
\end{equation}

\subsection{Improving convergence rate}
\label{sec3b}
At the end of the $t$th iteration of the detection algorithm described 
above, we obtain the probability of the $i$th user's information bit, 
$p^t_i$. The rate of convergence of this sequence 
$\{p^0_i,p^1_i,p^2_i,\cdots,p^t_i,\cdots\}$ 
can be improved by certain techniques. We discuss the following two 
techniques that helps us to improve the convergence. 
\begin{itemize}
\item
{\em Aitken acceleration}: Aitken's delta-squared process is a technique 
known in numerical analysis \cite{aitken} for accelerating sequence 
convergence. This method is also used in \cite{bickson} to accelerate 
the convergence of the Gaussian belief propagation algorithm. By this 
method, a linearly converging sequence of real numbers can be accelerated 
to converge quadratically. Although there is no rigorous proof guaranteeing 
this rate of convergence, empirical observations have shown that this method 
does accelerate the convergence of iterative algorithms. According to 
Aitken's acceleration method, we define a sequence
\begin{equation}
q_i^t=p_i^{t}-\frac{(p_i^{t+1}-p_i^{t})^2}{p_i^{t+2}-2p_i^{t+1}+p_i^{t}}.
\label{aitken}
\end{equation}
This new sequence $q_i^{t}$ converges faster than $p_i^{t}$ and to the same 
limit, whenever $p_i^{t}$ converges. After the first three iterations, 
$q_i$s can be used as the messages in the algorithm for faster convergence.

\item
{\em Damping}: Damping of messages passed in message passing algorithms is a 
scheme known to improve the rate of convergence of iterative algorithms 
\cite{damping}. At the $t$th iteration, the message is damped by obtaining a 
convex combination of the message computed at the $t$th iteration and the 
message at the $(t-1)$th iteration, with a damping factor $\Delta\in[0,1)$. 
Thus, if $\tilde{p}_i^t$ is the computed probability at the $t$th iteration, 
the message at the end of $t$th iteration is
\begin{equation}
p_i^{t}=(1-\Delta)\tilde{p}_i^t+\Delta p_i^{t-1}.
\label{damp}
\end{equation}
\end{itemize}
In section \ref{sec3d}, we will see the performance of these methods in 
improving the rate of convergence and the optimal choice for $\Delta$.

A listing of the proposed MPD algorithm with damping is given in 
{\bf Algorithm \ref{mpmdalgo}}, where $\vp=[p_1 \ p_2 \ \cdots \ p_{2K}]^T$
and $\tilde\vp=[\tilde p_1 \ \tilde p_2 \ \cdots \ \tilde p_{2K}]^T$.

\begin{algorithm}[t]
\caption{Proposed MPD algorithm}
\label{mpmdalgo}
\begin{algorithmic}[1]
\REQUIRE{$\vz$ , $\mj$, $\sigma^2_v$, $\Delta$}
\STATE {\bf Initialize}: $p_i^0\gets 0.5$, $i=1,\cdots,2K$
\FOR {$t = 1$ \TO  {\em number\_of\_iterations}}
{
\FOR{$i = 1$ \TO $2K$}{
\vspace{1mm}
\STATE $\mu_{i} \gets \sum\limits_{j=1,j\neq i}^{2K} J_{ij}(2p_j^{t-1}-1)$
\vspace{1mm}
\STATE $\sigma_{i}^2 \gets \sum\limits_{j=1,j\neq i}^{2K} 4J_{ij}^{2}p_j^{t-1}(1-p_j^{t-1})+\sigma^{2}_v$  
\vspace{1mm}
\STATE $L_i \gets \frac{2J_{ii}}{\sigma_i^2}(z_i-\mu_i)$
\vspace{1mm}
\STATE $\tilde{p}_i^t \gets \frac{e^{L_i}}{1+e^{L_i}}$ 
}
\ENDFOR
\STATE $\vp^{t} \gets (1-\Delta)\tilde{\vp}^t+\Delta \vp^{t-1}$
}
\ENDFOR
\end{algorithmic}
\end{algorithm}

\subsection{Complexity comparison between MPD and MMSE}
\label{sec3c}
The computational complexity of the MPD algorithm is as follows. The 
complexity (in number of real operations) required to compute 
(\ref{eq_mu}), (\ref{musigma}) 
and (\ref{pix}) is of order $O(K^2)$. The complexities of computing $\vz$ 
and $\mj$ are of orders $O(NK)$ and $O(NK^2)$, respectively. So, the total 
complexity of the proposed MPD is $O(NK^2)$, which is attractive for 
large-scale MIMO systems.

In Table \ref{comp1}, we present an interesting comparison between the
complexities of MPD and MMSE detection for $N=128,256$, and $K$
varied from 16 to 256. Since we have used 20 iterations for MPD in
all the BER simulations, we have taken the number of iterations to be 20
for the calculation of the MPD complexity. From Table \ref{comp1}, the
following interesting observations can be made: 1) for large $N$ (e.g.,
$N=256$), MPD complexity is less than MMSE complexity. This is because
MPD needs only matrix multiplication and not matrix inversion, whereas
MMSE detection needs both matrix multiplication and inversion; 2) for $N=128$, 
the MPD complexity for $K=64,96,128$ is less than the MMSE complexity. For 
$K=16,32$, the MPD complexity is almost the same as (marginally higher than) 
MMSE complexity, because the number of iterations ($=20$) is comparable 
with $K$ ($=16,32$). Also, MPD performs better than MMSE detection, and 
achieves close to optimal detection performance for large $K,N$, and different 
system loading factors. We will see this performance advantage of MPD in the 
following subsection. 

\begin{table}
\centering
\begin{tabular}{|c||c|c||c||c|c|c|}
\hline
&\multicolumn{6}{|c|}{Complexity in number of real operations $\times 10^6$}\\
\cline{2-7} $K$&\multicolumn{3}{|c|}{$N=128$}&\multicolumn{3}{|c|}{$N=256$}\\
\cline{2-7} & MMSE & MPD & SUMIS & MMSE & MPD & SUMIS\\
 & & (prop) &  in \cite{sumis} & & (prop)& in \cite{sumis} \\
\hline\hline
16&        0.177 & 0.179 & 0.483 & 0.333 & 0.296 & 0.917 \\ \hline
32&        0.748 & 0.749 & 1.737 & 1.321 & 1.190 & 3.130 \\ \hline
64&        3.593 & 3.200 & 7.538 & 5.789 & 4.773 & 12.420 \\ \hline
96&        9.584 & 7.208 & 19.368 & 14.450 & 10.748 & 29.837 \\ \hline
128&       19.770& 12.814& 39.194 & 28.355& 19.116 & 57.347 \\ \hline
256& -& -&  - & 157.373 & 76.505 & 307.633 \\ \hline
\end{tabular}
\caption{Comparison between the complexities (in number of real operations)
of the proposed MPD, MMSE detector, and SUMIS detector in \cite{sumis} for 
different values of $K,N$. Number of iterations for MPD = 20, and $n_s=3$ for 
SUMIS.}
\label{comp1}
\end{table}

\subsection{BER performance of MPD}
\label{sec3d}
In this subsection,
we present the uncoded BER performance of MPD obtained through simulations
for different system parameter settings.
We will now assume perfect knowledge ${\bf H}$. We will relax this assumption 
later.  First, in Fig. \ref{simmpmd_damp}, we plot the uncoded BER of MPD at 
an average SNR of 12 dB for $N=K=64$ for various values of the damping factor 
$\Delta$. The number of message passing iterations used is 20. From this 
figure, we observe that a damping factor of $\Delta=0.33$ is optimal. This 
value of $\Delta$ is found to give good performance for other values of 
system parameters as well. So we have used this value of $\Delta$ in all the 
simulations. Next, Fig. \ref{simmpmd_aitken} shows the uncoded BER of MPD as 
a function of iteration index with and without Aitken acceleration for 
$N=K=64$, SNR=12 dB, and $\Delta=0.33$. It can be observed that the 
convergence rate of the algorithm improves with Aitken acceleration. 

\begin{figure}
\includegraphics[width=3.6in,height=2.6in]{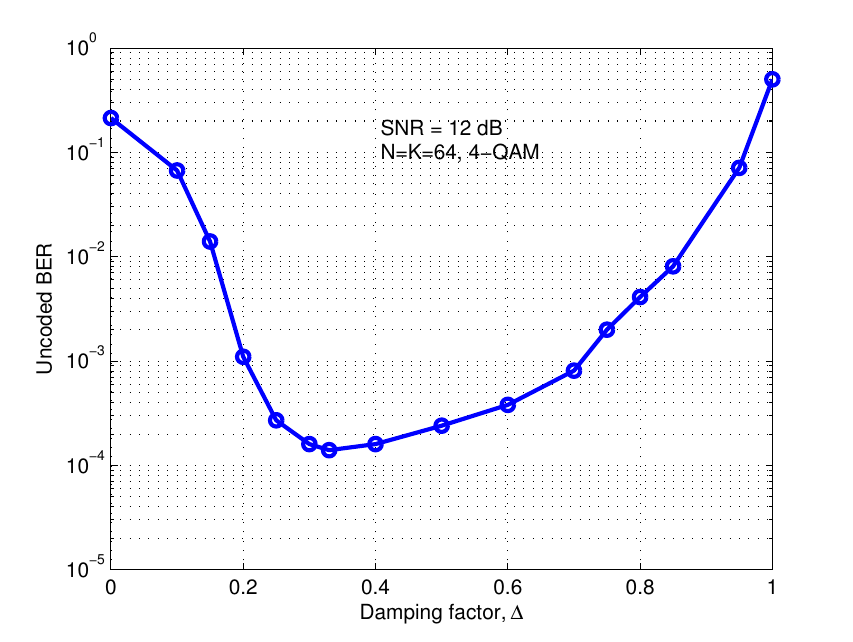} 
\vspace{-4mm}
\caption{
Uncoded BER performance of the proposed MPD algorithm as a 
function of damping factor $\Delta$. $N=K=64$, 4-QAM, SNR=12 dB.
}
\label{simmpmd_damp} 
\end{figure}

\begin{figure}
\includegraphics[width=3.6in,height=2.6in]{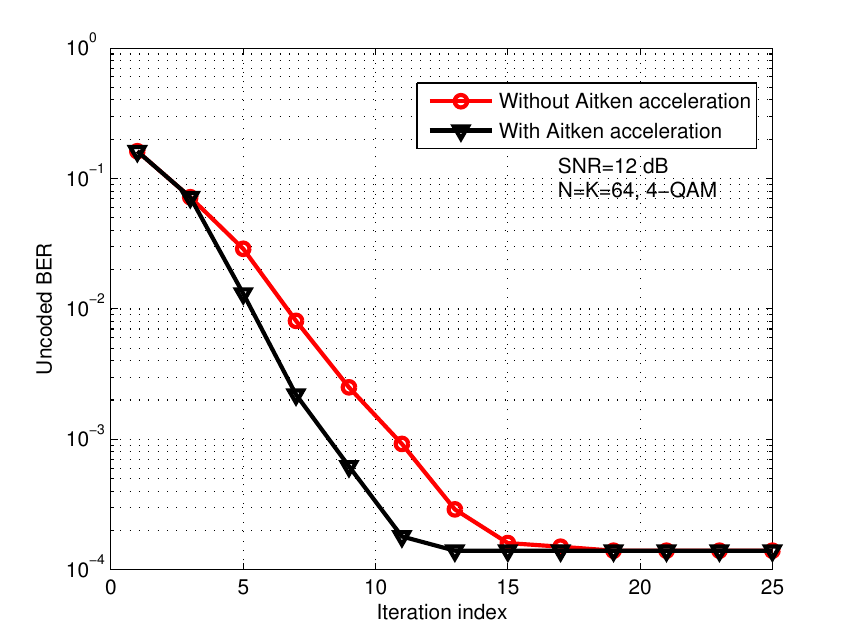} 
\vspace{-4mm}
\caption{Comparison of the convergence behavior of the MPD algorithm without 
and with Aitken acceleration. $N=K=64$, 4-QAM, SNR=12 dB.}
\label{simmpmd_aitken} 
\end{figure}

In Fig. \ref{simmpmd_N}, we plot the uncoded BER of MPD for different 
values of $N$ ($=4,8,16,32,64,128$) for a system loading factor of 
$\alpha=1$ ($K=N$). Since optimal detection performance for large-dimension 
systems is hard to obtain, we have plotted single-input single-output (SISO) 
additive white Gaussian noise (AWGN) channel performance as a lower bound on 
the optimum detection performance. MMSE detection performance 
is also plotted for comparison. From Fig. \ref{simmpmd_N}, it is observed 
that the performance of MPD improves for increasing $N,K$, and moves closer 
to the SISO-AWGN performance for large $N,K$. For example, the MPD performance 
for $N=K=128$ gets very close to SISO-AWGN performance. It is 
also observed that MPD performance is better than MMSE detection performance.

\begin{figure}
\includegraphics[width=3.6in,height=2.6in]{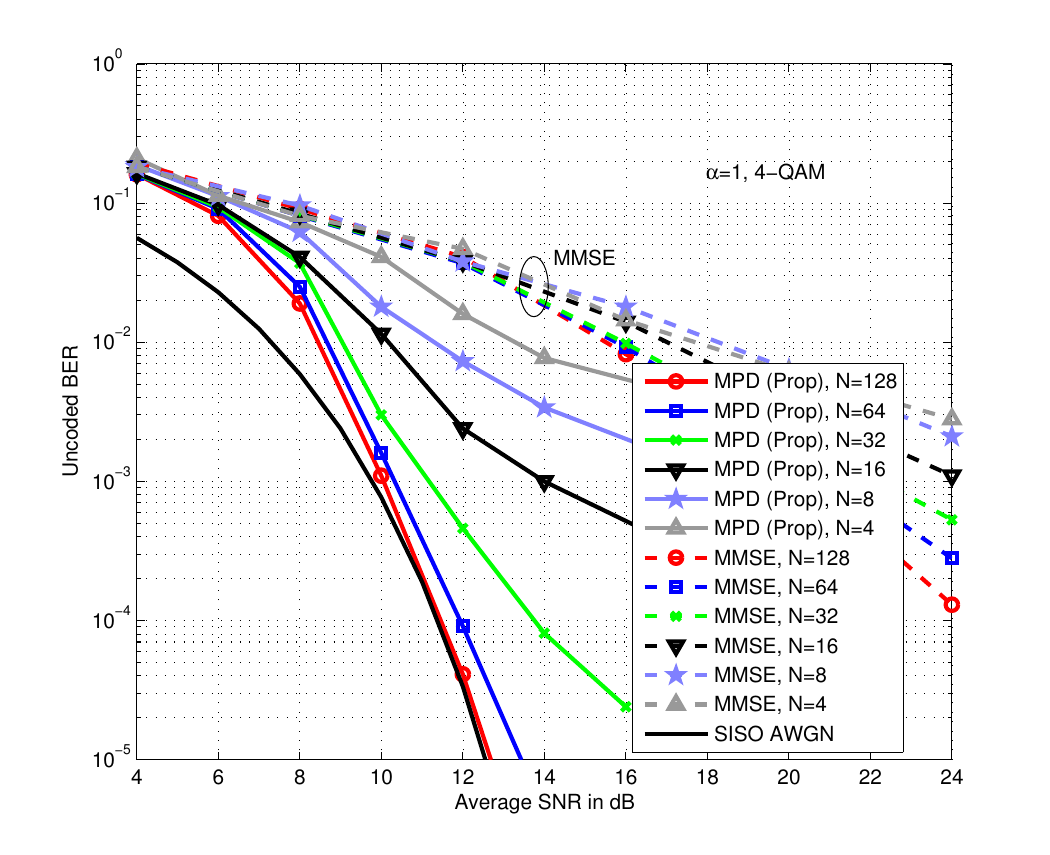} 
\vspace{-4mm}
\caption{
Uncoded BER performance of the MPD algorithm and the MMSE detector
for $N=K=4,8,16,32,64,128$, 4-QAM.
}
\label{simmpmd_N} 
\end{figure}

\begin{figure}
\centering
\includegraphics[width=3.6in,height=2.6in]{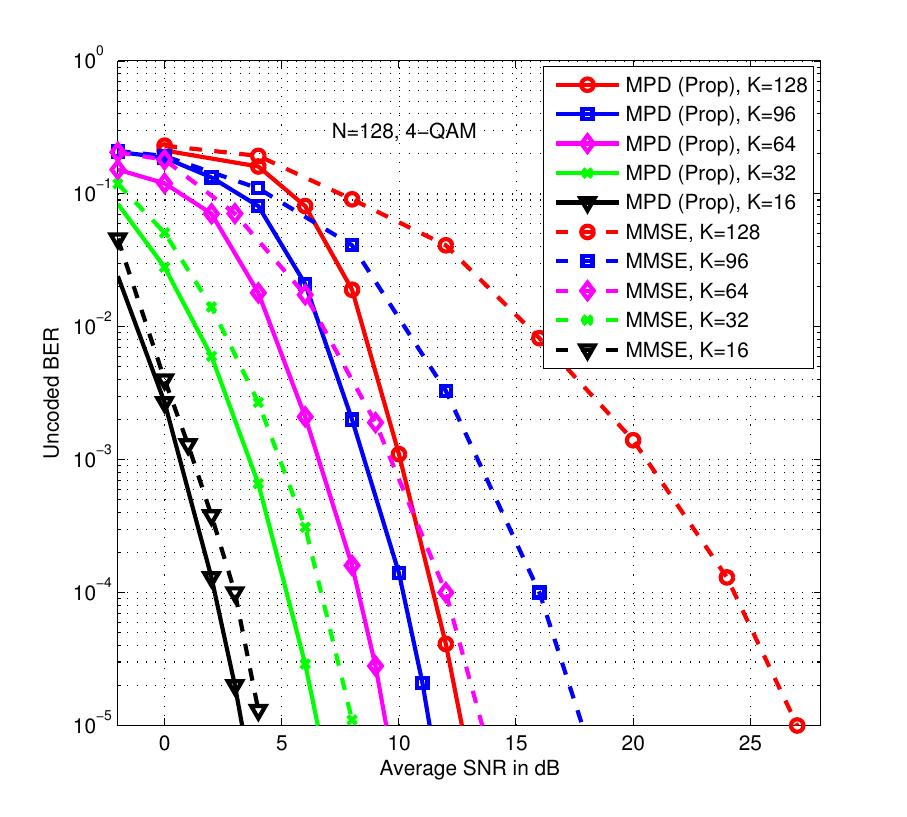}
\vspace{-4mm}
\caption{
Uncoded BER performance of the MPD algorithm and the MMSE 
detector for different values of $K\ (=16,32,64,96,128)$ for a fixed 
$N= 128$, 4-QAM.}
\label{simmpmd_K}
\end{figure}

\begin{figure}[h]
\includegraphics[width=3.6in,height=2.6in]{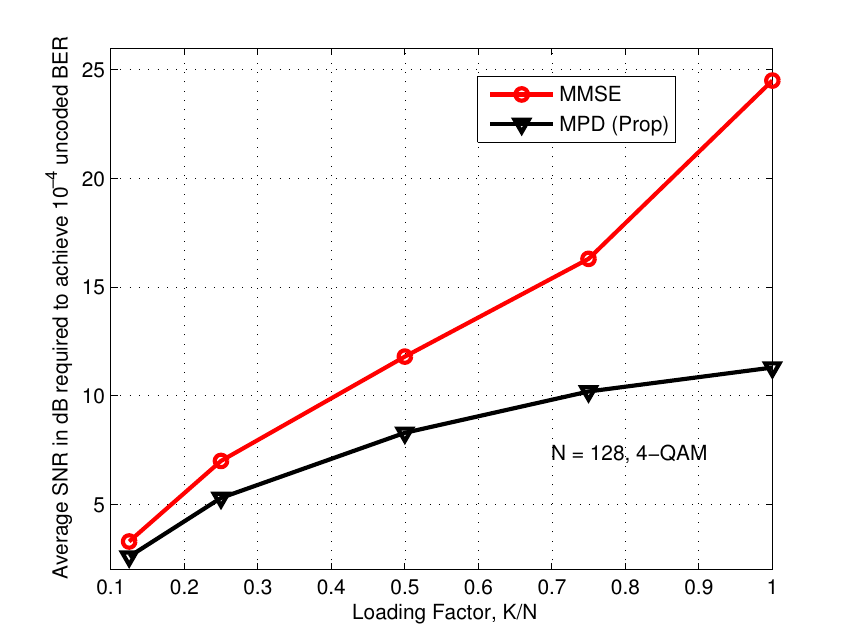} 
\caption{
Comparison between the average SNR required to achieve an uncoded 
BER of $10^{-4}$ in MPD and MMSE detection at different loading factors 
with $N=128$, 4-QAM.
}
\label{simmpmd_lf} 
\end{figure}

Figure \ref{simmpmd_K} shows the uncoded BER of MPD algorithm 
and MMSE detector for a fixed number of receiver antennas 
at the BS ($N=128$) and varying number of users ($K=16,32,64,96,128$), 
i.e., for different values of loading factors 
($\alpha=\frac{1}{8},\frac{1}{4},\frac{1}{2},\frac{3}{4},1$). It is 
observed that the BER performance improves considerably as the loading 
factor is reduced, which is expected. 
The MPD performance for different loading factors is 
better than MMSE detection performance. This observation is further 
illustrated in Fig. \ref{simmpmd_lf}, where the average SNRs required to 
achieve an uncoded BER of $10^{-4}$ in MPD and MMSE detection are plotted. 
It can be observed from Fig. \ref{simmpmd_lf} 
that the MPD outperforms the MMSE detection by about 1.2 dB at a loading 
factor of $\alpha=0.125$. This performance advantage of MPD over MMSE 
detection increases for increasing values of $\alpha$. For example, the 
performance advantage of MPD over MMSE detection is about 6.5 dB and 
12.5 dB for $\alpha=0.75$ and $\alpha=1$, 
respectively. The reason why MMSE detection performs quite poorly 
at high loading factors is because the spatial interference gets increased 
significantly at higher loading factors with large $N$ (e.g., $N=K=128$)
compared to lower loading factors, and MMSE detection does not 
perform interference cancellation/suppression. Whereas, the MPD is benefited
by the channel hardening effect with large $N,K$. The performance 
advantage of MPD becomes very attractive given that MPD complexity is almost 
same or less than the MMSE detection complexity (as discussed in Section 
\ref{sec3c}).

\begin{figure}
\includegraphics[width=3.6in,height=2.6in]{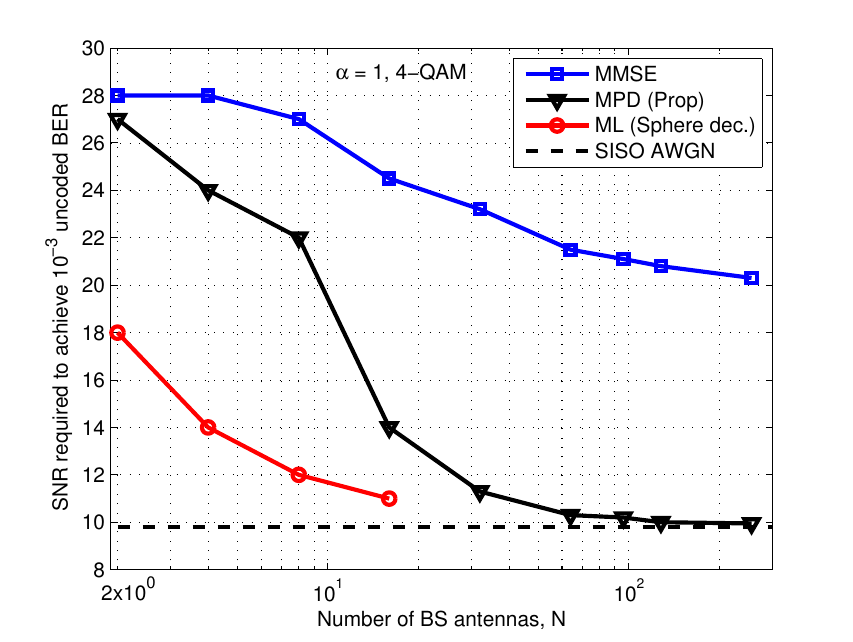} 
\vspace{-4mm}
\caption{
Comparison between the average SNR required to achieve an uncoded 
BER of $10^{-3}$ BER in ML (sphere decoding), MPD, and MMSE detection 
as a function of $N$ for $\alpha=1$ (i.e., $N=K$) and 4-QAM.
}
\vspace{-8mm}
\label{fig_sd} 
\end{figure}

The effect of channel hardening on the BER performance of the MPD algorithm
is further illustrated in Fig. \ref{fig_sd}. This figure shows the SNRs 
required to achieve $10^{-3}$ BER with MPD as well as MMSE detection in
$N=K=2$ to $N=K=256$ systems. We have also plotted the same for ML detection 
(using sphere decoding) in $N=K=2$ to $N=K=16$ systems. Since ML detection 
is prohibitive for larger dimensions, we have plotted the SNR required in
a SISO AWGN system as a lower bound on the ML performance. In small systems 
like $N=K=2,4,8$ systems where channel hardening is not significant, both 
MPD and MMSE performances are far from ML performance with MPD performing 
better than MMSE -- e.g., MPD performance is about 10 dB away from ML 
performance in $N=K=4,8$ systems, whereas MMSE performance is about 14 to 
15 dB away from ML performance in $N=K=4,8$ systems. In systems with size 
larger than $N=K=16$, channel hardening becomes more significant and the 
performance of MPD shows significant improvement compared to MMSE and gets 
closer to ML performance -- e.g., for $N=K=128$ system, the MPD performance 
is just about 0.25 dB away from the ML lower bound whereas the MMSE 
performance is away from the ML lower bound by about 10 dB. These 
observations illustrate that harder the channel gets, better is the MPD 
performance.

\subsection{Channel estimation for MPD}
\label{sec3e}
A key issue in large-scale MIMO systems is the estimation of channel gains. 
In conventional approaches, the $NK$ channel gains in the channel matrix are 
estimated and used for the detection of transmitted symbols. Note that in our 
transformed system model (\ref{eqn1}), the influence of the channel on vector
$\vz$ is through $\mh^T\mh$, rather than through $\mh$ as such. We propose 
to exploit this observation on the structure of the system model (\ref{eqn1}). 
Specifically, we propose to directly obtain an estimate of $\mh^T\mh$ and
use it in the MPD algorithm, rather than obtaining an estimate of $\mh$ as
done in conventional approaches. We note that this approach is simple and 
novel, and it works very well in the MPD algorithm (as we will see in the
performance results). We present the scheme to obtain an estimate of the
$\mh^T\mh$ matrix next.

{\em Estimating the $\mh^T\mh$ matrix:}

Note that we have defined $\mj=\mh^T\mh$. We are interested in obtaining
$\hat{\mj}$, an estimate of $\mj$. 
We assume that the channel is slowly fading, where the channel matrix 
$\mh$ remains constant over one frame duration (which is taken to be equal
to the coherence time of the channel). The length of one frame is $L_f$ 
channel uses. Each frame consists of a pilot part and a data part. The 
pilot part consists of $K$ channel uses, and the data part consists of 
$L_f-K$ channel uses. 

Let $\mx_p=P\mi_{K}$ denote the pilot matrix, where in the $i$th channel 
use, $1\leq i\leq K$, user $i$ transmits a pilot tone with amplitude $P$ 
and the other users remain silent. The received pilot matrix at the BS is 
then given by
\begin{eqnarray} 
\my_p & = & \mh\mx_p+\mw_p \nonumber \\
& = & P\mh+\mw_p,
\end{eqnarray}
where $P=\sqrt{KE_s}$, $E_s$ is the average symbol energy, and $\mw_p$ is 
the noise matrix. Using Lemma \ref{lemma1}, we obtain an estimate of the 
matrix $\mj$ as 
\begin{eqnarray}
\mje & = & \frac{\my_p^T\my_p}{NP^2}-\frac{\sigma^2_v}{P^2}\mi_K.
\label{mpmdj}
\end{eqnarray}
An estimate of the vector $\vz$ is obtained as
\begin{eqnarray}
\vze & =  & \frac{\my_p^T\vy}{NP}.
\label{mpmdz}
\end{eqnarray}
The estimates $\mje$ and $\vze$ are used as inputs to the MPD algorithm
in place of $\mj$ and $\vz$.

{\em Note on complexity:}

A key advantage of the above estimation scheme is its low complexity.
The computation of $\mje$ and $\vze$ in (\ref{mpmdj}) and (\ref{mpmdz}) 
requires only matrix and vector multiplications. Note that even when 
perfect knowledge of $\mh$ or an estimate of $\mh$ is available, similar 
computations are needed to compute $\mj$ and $\vz$. Further note that 
the additional complexity needed to obtain an estimate of $\mh$ in the
conventional approach is avoided in our approach.

\subsection{BER performance of the CHEMP receiver}
\label{sec3f}
As mentioned before, we refer to the combination of proposed MPD algorithm 
and the channel estimation scheme proposed in the previous subsection as 
the CHEMP receiver. In this subsection, we present the uncoded BER performance 
of the CHEMP receiver. The number of iterations used in the MPD algorithm is
20. We compare the performance of the CHEMP receiver with two other receivers,
namely, 1) MMSE detector with MMSE channel estimate, and 2) FG-GAI (factor 
graph with Gaussian approximation of interference) detector in \cite{jstsp} 
with MMSE channel estimate. We note that the FG-GAI detector in \cite{jstsp}
is also a message passing algorithm which used a Gaussian approximation of 
interference. But this approximation was done on the original system model 
in (\ref{sys}), whereas in the proposed MPD algorithm, the Gaussian 
approximation is done on the matched filtered system model in (\ref{eqn1}).

In Fig. \ref{simmpmdce}, we present an uncoded BER performance comparison
between 1) proposed CHEMP receiver, 2) MMSE detector with MMSE channel
estimate, and 3) FG-GAI detector in \cite{jstsp} with MMSE channel estimate.
It can be seen that the performance of the proposed  CHEMP receiver is 
significantly better than those of the MMSE and FG-GAI detectors with 
MMSE estimate of the channel. Observe that the performances 
of MPD and FG-GAI under perfect CSI conditions are almost the same, whereas 
under estimated CSI conditions, the CHEMP receiver performs significantly 
better than FG-GAI with MMSE channel estimate. An analytical reasoning for 
this is presented in Section \ref{sec4b}. 

\begin{figure}
\includegraphics[width=3.6in,height=2.6in]{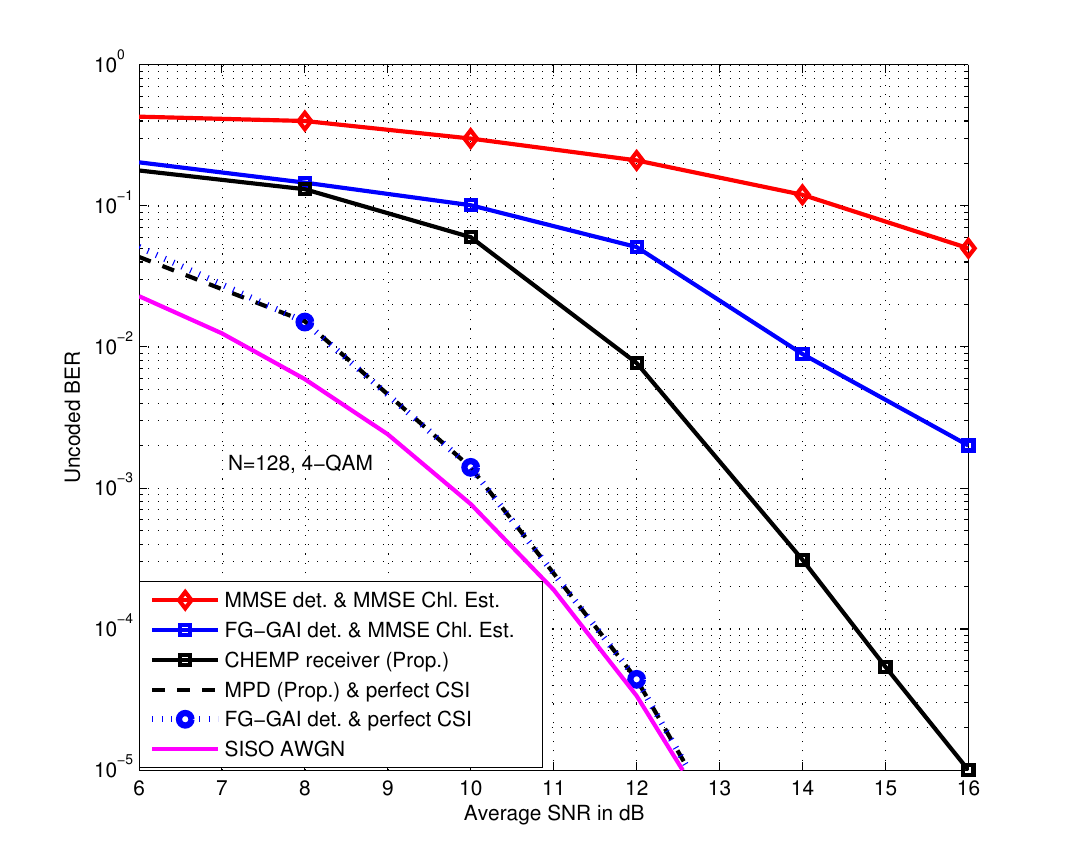} 
\vspace{-4mm}
\caption{
Comparison of the BER performance of the proposed CHEMP receiver
with those of 1) MMSE detector with MMSE channel estimate, and 2) FG-GAI
detector in \cite{jstsp} with MMSE channel estimate, for $N=K=128$, 4-QAM.
}
\label{simmpmdce} 
\end{figure}

\begin{figure}
\includegraphics[width=3.6in,height=2.6in]{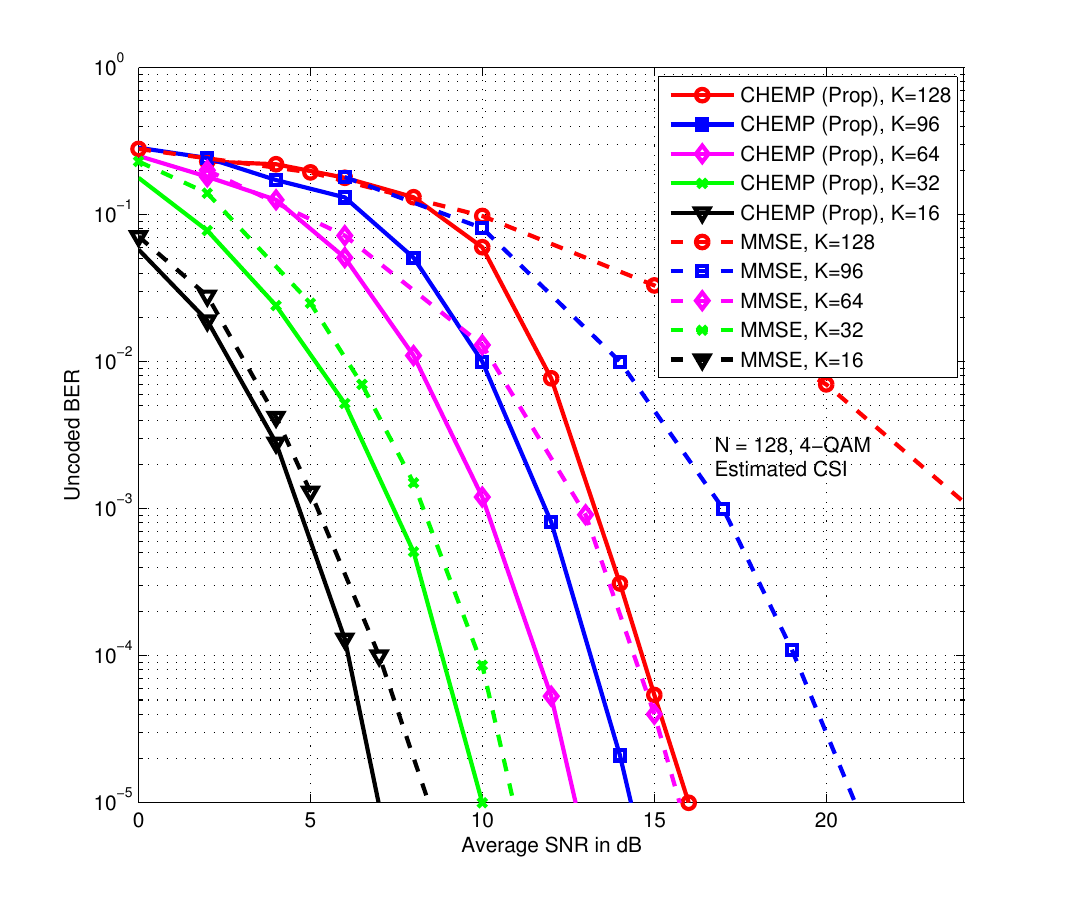} 
\vspace{-4mm}
\caption{
BER performance of 1) proposed CHEMP receiver and 2) MMSE detector
with MMSE channel estimate, for different values of $K$ \ 
($=16,32,64,96,128$) for a fixed value of $N$ \ ($=128$), 4-QAM.
}
\label{simmpmdce_K} 
\end{figure}

\begin{figure}
\includegraphics[width=3.6in,height=2.6in]{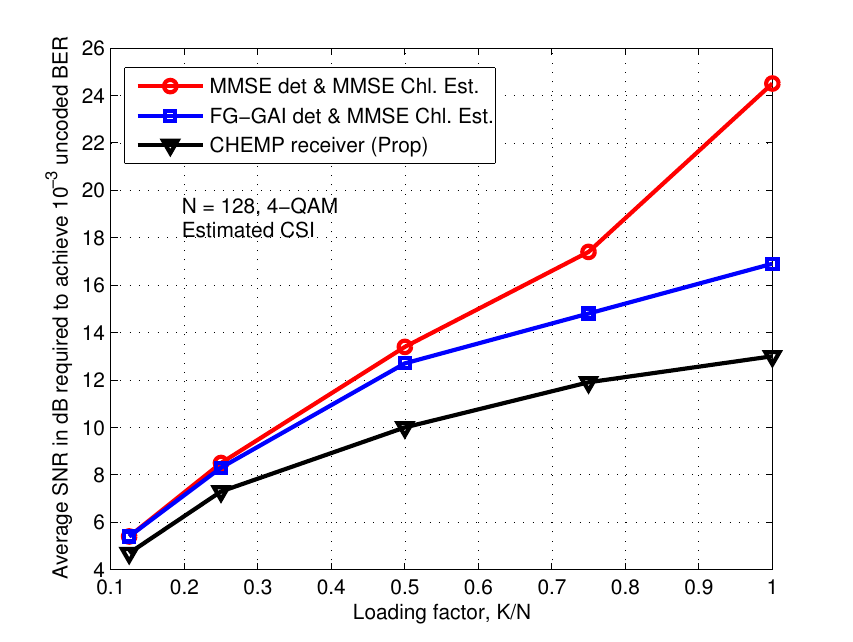} 
\vspace{-4mm}
\caption{
Comparison between the average SNR required to achieve an 
uncoded BER of $10^{-3}$ in 1) proposed CHEMP receiver, 2) MMSE detector 
with MMSE channel estimate, and 3) FG-GAI detector in \cite{jstsp} with 
MMSE channel estimate, at different loading factors with $N=128$, 4-QAM. 
}
\vspace{-4mm}
\label{simmpmdce_lf} 
\end{figure}

Figure \ref{simmpmdce_K} shows the performance of the CHEMP receiver 
and MMSE detector with MMSE channel estimate for different 
number of users ($K=16,32,64,96,128$) and fixed number of BS antennas 
($N=128$). As expected, the performance improves for smaller values of 
$K$.  Also, CHEMP receiver performs better than MMSE detector 
with MMSE channel estimate. In Fig. \ref{simmpmdce_lf}, 
we illustrate
a comparison between the the average SNR required to achieve an uncoded
BER of $10^{-3}$ in 1) proposed CHEMP receiver, 2) MMSE detector with MMSE
channel estimate, and 3) FG-GAI detector in \cite{jstsp} with MMSE channel
estimate, at different loading factors with $N=128$. From this figure,
we observe that the CHEMP receiver outperforms the other two receivers.
For example, the CHEMP receiver outperforms the MMSE detector with 
MMSE channel estimate by about 0.6 dB to 
 11 dB for loading factors in the range of $\alpha=0.125$ to $\alpha=1$.
Likewise, the performance advantage of
the CHEMP receiver over FG-GAI detector with MMSE channel estimate is
about 0.6 dB to  4 dB for loading factors in the range of 
$\alpha=0.125$ to $\alpha=1$.

\subsection{Comparison with SUMIS detector in \cite{sumis}}
\label{sec_sumis}
A subspace marginalization with interference suppression (SUMIS) detector 
has been proposed recently in \cite{sumis}. The SUMIS detector uses the 
ideas of partial marginalization (via a parameter $n_s\in \{1,2,\cdots,K\}$) 
and soft interference suppression. The order of complexity of the SUMIS 
detector is $K^3+2NK+K^2(2n_s^2+6)$ \cite{sumis}. Here, we present a 
performance and complexity comparison between the proposed MPD and the 
SUMIS detector. Figure \ref{fig_sumis1} shows the BER performance of
the proposed MPD and SUMIS detector (with $n_s=3$) for various values of 
$K$ keeping $N$ fixed at 128, 4-QAM, and perfect CSI. For the same system 
parameters, Fig. \ref{fig_sumis2} shows the comparison between the proposed
CHEMP receiver and SUMIS detector with MMSE channel estimate. These figures 
show that the proposed MPD/CHEMP performs better than SUMIS/SUMIS with MMSE 
channel estimate. The proposed detector achieves better performance at less 
complexity than SUMIS detector. This can be observed in Table \ref{comp1} 
which presents the complexities of MPD and SUMIS for different values of 
$N$ and $K$. The complexity advantage of the proposed MPD over SUMIS is 
because MPD needs only matrix multiplication and not matrix inversion, 
whereas SUMIS needs both matrix multiplication and matrix inversion.

\begin{figure}[h]
\includegraphics[width=3.6in,height=2.6in]{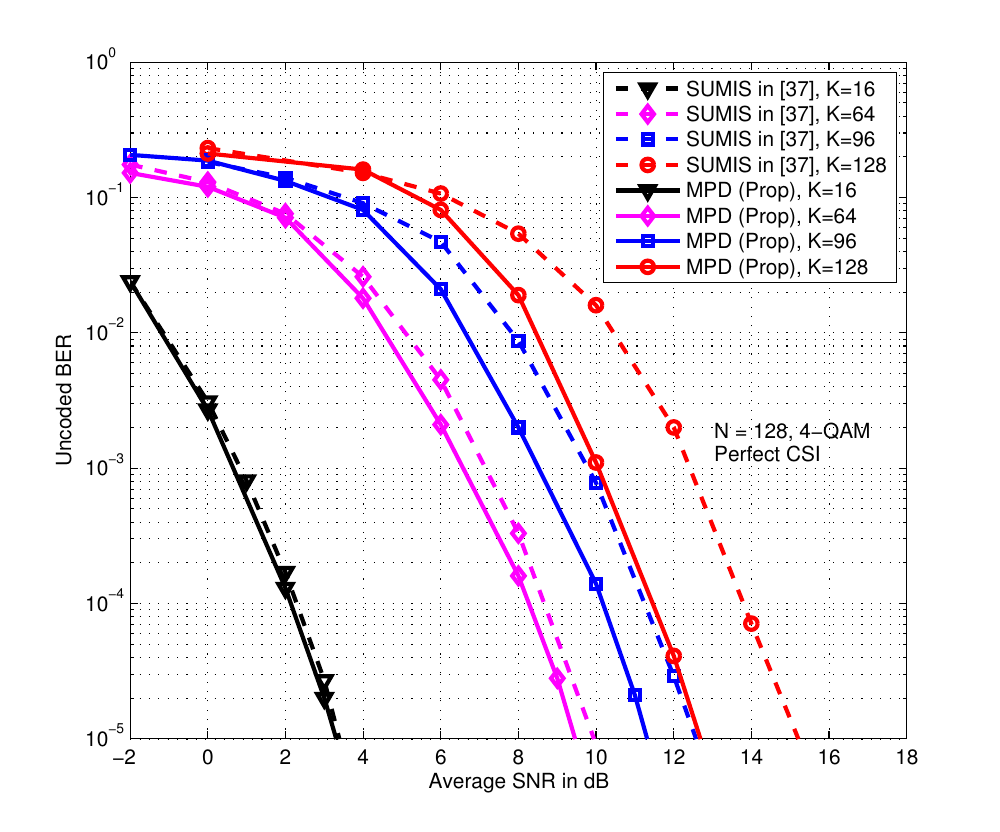}
\vspace{-4mm}
\caption{
BER performance of 1) proposed MPD detector and 2)
SUMIS detector in \cite{sumis} for different values of $K$ \ 
($=16,64,96,128$) for a fixed value of $N$ \ ($=128$), 4-QAM,
perfect CSI.
}
\label{fig_sumis1}
\end{figure}

\begin{figure}[h]
\includegraphics[width=3.6in,height=2.6in]{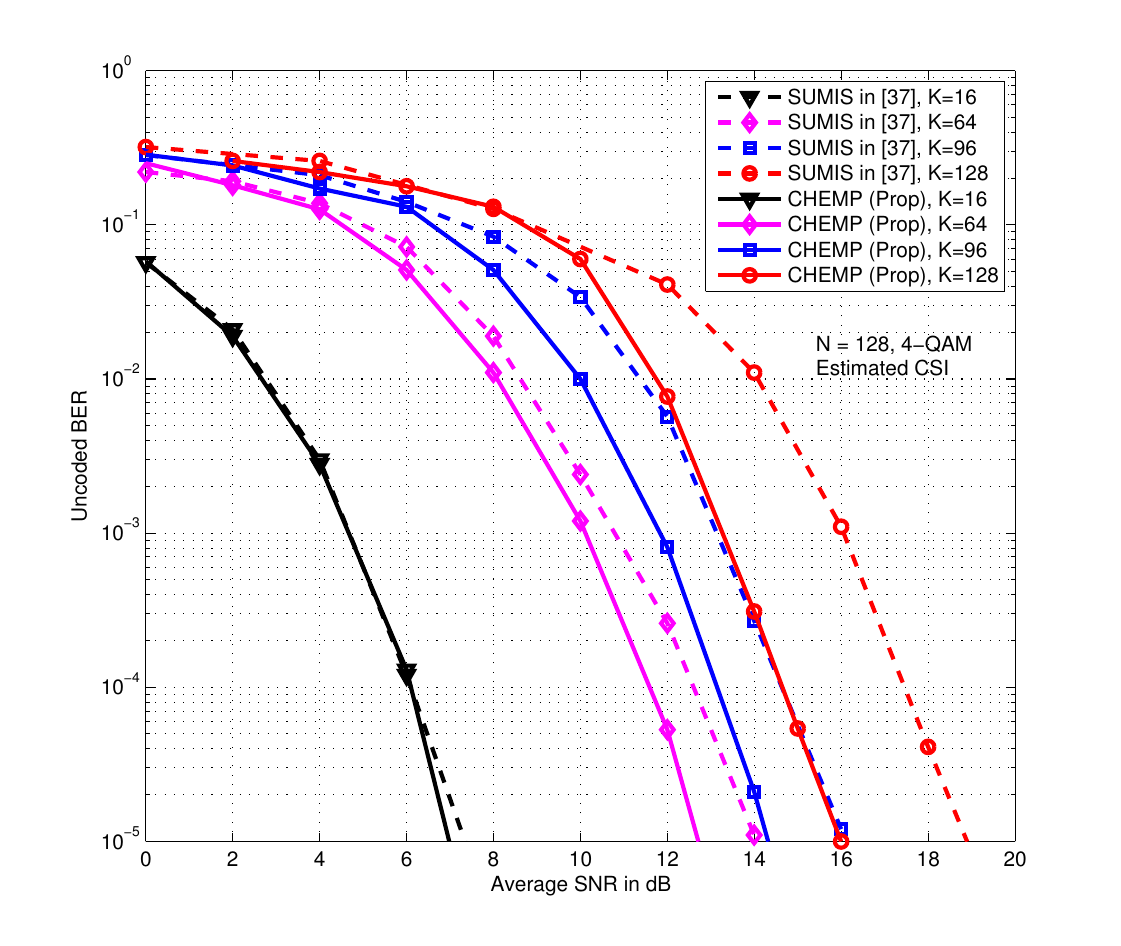}
\vspace{-4mm}
\caption{
BER performance of 1) proposed CHEMP receiver and 2)
SUMIS detector with MMSE channel estimate for different values of $K$ \ 
($=16,32,64,128$) for a fixed value of $N$ \ ($=128$), 4-QAM.}
\label{fig_sumis2}
\end{figure}

\section{Analysis of the proposed CHEMP receiver}
\label{sec4}
In this section, we carry out some analysis of the proposed CHEMP receiver.
The analysis reported in this section has two parts. In the first part, we 
analyze the convergence of the proposed MPD algorithm, and give a sufficient
condition for the algorithm to converge to the correct solution. In the 
second part, we present an analysis of the mean square difference (MSD) of 
the LLRs computed with estimated CSI and perfect CSI for the proposed CHEMP 
receiver as well as the FG-GAI receiver (i.e., FG-GAI 
detector in \cite{jstsp} with MMSE channel estimate). 

\subsection{Analysis of the convergence of MPD algorithm} 
\label{sec4a}
First we state the lemmas that we require to prove results in the later 
parts of this subsection. Let $\mathcal{P}$ denote the set 
$\{\vp \mid \vp\in[0,1]^{2K}\}$.
\begin{lem}
\label{lemma2}
The set $\mathcal{P}$ is a compact and convex set.
\end{lem}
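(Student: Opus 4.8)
The plan is to verify the two properties separately, both by reduction to elementary facts about the unit interval in $\mathbb{R}$ together with finite-dimensionality. Since $\mathcal{P}=[0,1]^{2K}$ is literally the $2K$-fold Cartesian product of the closed interval $[0,1]$, neither claim requires the detailed structure of the MPD iteration; the lemma is purely a geometric preliminary for the fixed-point argument that follows.

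For compactness I would invoke the Heine--Borel theorem, which applies because $\mathcal{P}\subset\mathbb{R}^{2K}$ with $2K$ finite. It then suffices to check that $\mathcal{P}$ is closed and bounded. Boundedness is immediate: every $\vp\in\mathcal{P}$ satisfies $\|\vp\|\le\sqrt{2K}$, so $\mathcal{P}$ sits inside a ball of radius $\sqrt{2K}$ about the origin. For closedness I would write $\mathcal{P}=\bigcap_{i=1}^{2K}\pi_i^{-1}([0,1])$, where $\pi_i$ is the $i$th coordinate projection; each $\pi_i$ is continuous and $[0,1]$ is closed in $\mathbb{R}$, so each $\pi_i^{-1}([0,1])$ is closed and hence so is the finite intersection. (Equivalently, one may note that a finite product of closed sets is closed.)

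For convexity I would take $\vp,\vq\in\mathcal{P}$ and $\lambda\in[0,1]$ and examine $\vr\Define\lambda\vp+(1-\lambda)\vq$ coordinatewise: for each $i$, $r_i=\lambda p_i+(1-\lambda)q_i$ is a convex combination of the two numbers $p_i,q_i\in[0,1]$, hence $0\le r_i\le1$; since this holds for all $i$, $\vr\in\mathcal{P}$. Thus $\mathcal{P}$ is convex. I do not anticipate any real obstacle here: the only point worth stating explicitly is that $2K<\infty$, so that Heine--Borel (rather than a weaker infinite-dimensional substitute) is available; everything else is a one-line verification against the definitions.
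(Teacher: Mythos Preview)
Your proposal is correct and follows essentially the same route as the paper: the paper also argues compactness from the fact that each coordinate lies in the closed compact interval $[0,1]\subset\mathbb{R}$ (so $\mathcal{P}$ is closed and bounded in $\mathbb{R}^{2K}$), and proves convexity by the identical coordinatewise verification that $\lambda\vp_1+(1-\lambda)\vp_2\in\mathcal{P}$. Your version is somewhat more explicit in invoking Heine--Borel and writing $\mathcal{P}$ as an intersection of preimages, but there is no substantive difference.
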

\begin{proof}
Since every element $p_i$ of any $\vp\in\mathcal{P}$ is from the same closed 
compact interval $[0,1]\subset \mathbb{R}$, $\mathcal{P}$ is also a closed 
subset of $\mathbb{R}^{2K}$, and hence $\mathcal{P}$ is also a compact subset.
Let $\vp_1$ and $\vp_2$ be any two elements of $\mathcal{P}$. Then it can be 
seen that for any $\lambda\in[0,1]$,
\begin{equation}
\lambda\vp_1+(1-\lambda)\vp_2\in\mathcal{P}.
\end{equation}
Hence, $\mathcal{P}$ is a convex set. This set $\mathcal{P}$ is the compact
convex subset of $\mathbb{R}^{2K}$ consisting of all probability vectors.
\end{proof}
We define the following variables for convenience:
\begin{eqnarray}
V_i^+(\vp)&\Define&z_i-J_{ii}-\sum\limits_{j=1,j\neq i}^{2K}J_{ij}(2p_j-1), 
\nonumber\\ 
V_i^-(\vp)&\Define&z_i+J_{ii}-\sum\limits_{j=1,j\neq i}^{2K}J_{ij}(2p_j-1),
\label{vs}
\end{eqnarray}
\begin{equation}
A_i^+(\vp)\Define\frac{-1}{2\sigma^2_i}\left(V_i^+(\vp)\right)^2, \quad\quad
A_i^-(\vp)\Define\frac{-1}{2\sigma^2_i}\left(V_i^-(\vp)\right)^2, \nonumber
\end{equation}
\begin{equation}
f_i^+(\vp)\Define\exp\big(A_i^+(\vp)\big), \quad \quad
f_i^-(\vp)\Define\exp\big(A_i^-(\vp)\big),
\end{equation}
where $z_i,J_{ij},\sigma_i$ are constants in $\mathbb{R}$, $\sigma_i>0$ and
$\vp\in\mathcal{P}$.
\begin{lem}
\label{lemma3} 
Let $f(\vp)$ be a function such that if $\vp'=f(\vp)$ then 
$p'_i=f_i(\vp)\Define\frac{f_i^+(\vp)}{f_i^+(\vp)+f_i^-(\vp)}$. Then $f(\vp)$ 
is continuous in $\mathcal{P}$.        
\end{lem}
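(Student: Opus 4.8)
The plan is to show that $f$ is continuous by exhibiting it as a composition of elementary continuous operations, the only subtlety being that the denominator $f_i^+(\vp)+f_i^-(\vp)$ must be shown to be strictly positive on all of $\mathcal{P}$ so that the quotient is well defined and continuous.

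First I would observe that for each fixed $i$ the maps $\vp\mapsto V_i^+(\vp)$ and $\vp\mapsto V_i^-(\vp)$ defined in (\ref{vs}) are affine in the coordinates $p_1,\dots,p_{2K}$ (a constant plus a linear combination with coefficients $-2J_{ij}$), and affine maps on $\mathbb{R}^{2K}$ are continuous. Composing with the squaring map $t\mapsto t^2$ and with multiplication by the constant $-1/(2\sigma_i^2)$ (well defined since $\sigma_i>0$) shows that $A_i^+$ and $A_i^-$ are continuous on $\mathcal{P}$. Then $f_i^+=\exp\circ A_i^+$ and $f_i^-=\exp\circ A_i^-$ are continuous, being compositions of continuous functions with the (everywhere continuous) exponential.

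Next I would handle the quotient. Since the exponential is strictly positive, $f_i^+(\vp)>0$ and $f_i^-(\vp)>0$ for every $\vp\in\mathcal{P}$, so the denominator satisfies $f_i^+(\vp)+f_i^-(\vp)\geq f_i^+(\vp)>0$ and never vanishes on $\mathcal{P}$. Hence $f_i=f_i^+/(f_i^++f_i^-)$ is a ratio of continuous functions with nowhere-vanishing denominator, and is therefore continuous on $\mathcal{P}$. As a by-product $0<f_i(\vp)<1$, so $f$ indeed maps $\mathcal{P}$ into $\mathcal{P}$, which is precisely what is needed together with Lemma~\ref{lemma2} for a Brouwer-type fixed-point argument in the sequel. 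Finally, a map into $\mathbb{R}^{2K}$ is continuous if and only if each of its component functions is continuous, so continuity of $f=(f_1,\dots,f_{2K})$ on $\mathcal{P}$ follows.

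I expect no genuine obstacle: every step is routine. The one point requiring care is making explicit that the denominator is strictly positive on the whole of $\mathcal{P}$ — immediate from positivity of $\exp$, but it must be stated, since continuity of a quotient fails where the denominator vanishes. One could alternatively sidestep the quotient by writing $f_i(\vp)=\bigl(1+\exp(A_i^-(\vp)-A_i^+(\vp))\bigr)^{-1}$ and invoking continuity of the logistic function $t\mapsto 1/(1+e^{-t})$, but the direct argument above is cleaner given the notation already introduced.
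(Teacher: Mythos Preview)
Your proposal is correct and follows essentially the same route as the paper: show each $A_i^\pm$ is continuous (polynomial/affine composed with elementary operations), compose with $\exp$ to get continuous $f_i^\pm$, argue the denominator never vanishes, conclude each $f_i$ and hence $f$ is continuous. Your justification that the denominator is strictly positive (via strict positivity of $\exp$) is in fact cleaner than the paper's phrasing, which invokes ``$\exp$ is non-negative'' together with $\vp$ lying in a closed set.
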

\begin{proof}
We see that $f:\mathcal{P}\rightarrow\mathcal{P}$. Since $A_i^+(\vp)$ and 
$A_i^-(\vp)$ are polynomial functions in $p_j, j\in\{1,\cdots,2K\}\setminus i$ 
and $\exp(.)$ is a continuous monotone function, $f_i^+(\vp)$ and 
$f_i^-(\vp)$ are continuous functions in $\mathcal{P}$. Since $\vp$ belongs 
to a closed set and $\exp(.)$ is a non-negative function, the term 
\big($f_i^+(\vp)+f_i^-(\vp)\big)$ is always positive. Hence, $f_i(\vp)$ being 
a ratio of two continuous functions with non-vanishing denominator, is also a 
continuous function. This proves that $f(\vp)$ is continuous in $\mathcal{P}$, 
as all its component functions are continuous in $\mathcal{P}$.
\end{proof}

From Lemma \ref{lemma3} we see that $f(\vp)$ is a recursive map that 
represents the proposed MPD algorithm in Section \ref{sec3a}.

\begin{prop}
\label{prop1}
The function $f(\vp)$ defined in Lemma \ref{lemma3} has a fixed point in 
$\mathcal{P}$.
\end{prop}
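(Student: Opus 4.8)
The plan is to invoke Brouwer's fixed point theorem, for which Lemmas \ref{lemma2} and \ref{lemma3} have already been set up as the two hypotheses. Recall that Brouwer's theorem states: if $\mathcal{K}\subset\mathbb{R}^m$ is nonempty, compact and convex, and $g:\mathcal{K}\rightarrow\mathcal{K}$ is continuous, then $g$ has at least one fixed point in $\mathcal{K}$.

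First I would check the three ingredients. (i) The domain $\mathcal{P}=[0,1]^{2K}$ is nonempty, and by Lemma \ref{lemma2} it is a compact convex subset of $\mathbb{R}^{2K}$. (ii) By construction in Lemma \ref{lemma3}, $f$ maps $\mathcal{P}$ into itself: each component $f_i(\vp)=f_i^+(\vp)/\big(f_i^+(\vp)+f_i^-(\vp)\big)$ is a ratio of a nonnegative quantity to a strictly positive quantity, and is bounded above by $1$, so $f_i(\vp)\in[0,1]$ and hence $f(\vp)\in\mathcal{P}$. (iii) By Lemma \ref{lemma3}, $f$ is continuous on $\mathcal{P}$. With (i)--(iii) in hand, Brouwer's theorem immediately yields a point $\vp^\star\in\mathcal{P}$ with $f(\vp^\star)=\vp^\star$, which is the claimed fixed point.

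Since $f$ is precisely the recursive update map of the proposed MPD algorithm (as noted right after Lemma \ref{lemma3}), such a $\vp^\star$ is a stationary point of the message passing iteration, i.e., if the algorithm is initialized at (or reaches) $\vp^\star$ it stays there; this is the interpretation I would record after the existence statement.

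Honestly there is no real obstacle here: the work has all been front-loaded into Lemmas \ref{lemma2} and \ref{lemma3}, and the proposition is a one-line appeal to a classical theorem. The only point deserving a word of care is making explicit that $f(\mathcal{P})\subseteq\mathcal{P}$ (self-map), and that Brouwer applies to arbitrary compact convex bodies, not just balls — but the latter is standard (any nonempty compact convex set in $\mathbb{R}^m$ is a retract of, and in fact homeomorphic to, a closed ball of the appropriate dimension, so the ball version transfers). I would not belabor these and would keep the proof to three or four sentences.
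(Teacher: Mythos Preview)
Your proposal is correct and matches the paper's own proof essentially line for line: the paper invokes Lemma~\ref{lemma2} for compactness and convexity of $\mathcal{P}$, Lemma~\ref{lemma3} for continuity and the self-map property $f:\mathcal{P}\rightarrow\mathcal{P}$, and then applies Brouwer's fixed point theorem. Your additional remarks about the self-map verification and the interpretation as a stationary point of the MPD iteration are fine elaborations but not needed for the proof itself.
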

\begin{proof}
By Lemma \ref{lemma2}, $\mathcal{P}$ is a compact convex set and by Lemma 
\ref{lemma3}, $f(\vp)$ is a continuous function such that 
$f:\mathcal{P}\rightarrow\mathcal{P}$. Hence, by Brouwer's fixed point 
theorem \cite{svbook}, $f(\vp)$ has a fixed point in $\mathcal{P}$.
\end{proof}

Proposition 1 proves that the proposed MPD algorithm has a fixed point.

Now, we give a sufficient condition for the MPD algorithm to converge to the 
correct solution. Since the Gaussian distribution is a symmetric function with 
its positive part being monotone decreasing, we have $p'_i>\frac{1}{2}$ in the 
function $\vp'=f(\vp)$ whenever 
$V_i^+(\vp)^2<V_i^-(\vp)^2$.
Let
\begin{eqnarray}
\hspace{-4mm}
d_i&\Define &V_i^+(\vp)^2-V_i^-(\vp)^2\nonumber\\
&=&\hspace{-3mm}-4J_{ii}\Big[J_{ii}x_i+
\hspace{-3mm}\sum_{j=1,j\neq i}^{2K}J_{ij}(x_j-2p_j+1)+n_i\Big].
\end{eqnarray}
We know that $J_{ii}>0, \forall i$. When $x_i=+1$, $p'_i>\frac{1}{2}$ iff 
$d_i<0$, and $d_i$ will be negative irrespective of $\vp$ iff 
\begin{equation}
J_{ii}+\sum_{j=1,j\neq i}^{2K}J_{ij}(x_j-2p_j+1)+n_i>0.
\label{eqy1}
\end{equation}
Bounding the $J_{ij}(x_j-2p_j+1)$ term on the LHS of (\ref{eqy1}) by
$-2|J_{ij}|$, at high SNRs, we get
\begin{equation}
J_{ii} > 2\sum\limits_{j=1,j\neq i}^{2K}|J_{ij}|.
\label{condition}
\end{equation} 
It can be similarly shown that (\ref{condition}) should be true for $d_i>0$
when $x_i=-1$, irrespective of $\vp$. Thus, when (\ref{condition}) is true
the MPD algorithm has a fixed point that is provably unique and attractive.

When the algorithm starts with an initial vector of $p_i=0.5, \forall i$,
then the condition in (\ref{condition}) can be simplified to
\begin{equation}
J_{ii} > \sum\limits_{j=1,j\neq i}^{2K}|J_{ij}|,
\label{condition2}
\end{equation}
which is nothing but the diagonal dominance condition for the matrix $\mj$, 
and it gives a sufficient condition for the MPD algorithm to converge to 
the correct solution. 
It should be noted that (\ref{condition2}) is a not a necessary condition 
for convergence. From extensive simulations, it has been observed that
the MPD algorithm performs very well for large $N,K$ even when the matrix 
$\mj$ is not diagonally dominant. 

\subsection{Analysis of LLRs in CHEMP and FG-GAI receivers} 
\label{sec4b}
In Fig. \ref{simmpmdce}, we observed that while the performances of MPD and 
FG-GAI under perfect CSI conditions are almost the same, under estimated CSI 
conditions, the CHEMP receiver performs significantly better than FG-GAI with 
MMSE channel estimate. Here, we shall present an LLR analysis that explains 
the reason for this performance advantage of CHEMP receiver under estimated 
CSI conditions.

We note that there are three different LLRs of interest here, which 
we call as Type-1 LLR, Type-2 LLR, and Type-3 LLR. Type-1 LLR is the 
`true' LLR in the `exact' MAP detector. Type-2 LLR is an approximate
LLR in a detector (e.g., MPD, FG-GAI detectors) with perfect CSI. Type-3
LLR is an approximate LLR in a detector with estimated CSI. A comparison
between the Type-I LLR and Type-2 LLR of MPD for large dimensions like
$N=K=128$ is infeasible because of the exponential complexity of the
computation of LLRs in the exact MAP detector. For the purpose of 
analytically reasoning the performance advantage of the CHEMP receiver, 
we use a performance measure which is the mean square difference (MSD) 
between 1) Type-2 and Type-3 LLRs of the MPD detector, and 2) Type-2 and 
Type-3 LLRs of the FG-GAI detector. This MSD measure for a given detector 
can be viewed as an indicator of the relative degradation of the LLR of 
the detector computed under perfect CSI to that computed under estimated 
CSI. In the following, we derive upper bounds on the MSD of LLRs in
CHEMP receiver and FG-GAI with MMSE channel estimate.

The signal vector $\hat{\vz}$ in
the CHEMP receiver given by (\ref{mpmdz}) can be written as 
\begin{eqnarray}
\hat{\vz}&=&\frac{1}{NP}(\my_p^T\mh\vx+\my_p^T\vw)\nonumber \\
&=& \underbrace{\bigg(\mj+\frac{\mw_p^T\mh}{NP}\bigg)}_{\Define \widetilde{\mj}}\vx 
+ \underbrace{\bigg(\frac{\mh}{N}+\frac{\mw_p}{NP}\bigg)^T\vw}_{\Define \ \tvw}
   \nonumber \\
&=&\tmj\vx+\tvw.
\label{eqx1}
\end{eqnarray}
Likewise, the matrix $\mje$ in the CHEMP receiver given by (\ref{mpmdj}) 
can be written as 

\vspace{-2mm}
{\small
\begin{eqnarray}
\hat{\bf J} & \hspace{-2mm} \Define & \hspace{-2mm} \frac{(P\mh+\mw_p)^T(P\mh+\mw_p)}{NP^2}-\frac{\sigma^2_v}{P^2}\mi_K
\nonumber \\
&\hspace{-2mm} = & \hspace{-2mm} \bigg(\mj+\frac{\mw_p^T\mh}{NP}\bigg) + \underbrace{\frac{1}{NP}\bigg(\mh^T\mw_p+
\frac{\mw_p^T\mw_p}{P}\bigg)-\frac{\sigma^2_v}{P^2}\mi_K}_{\Define \ \tmj'}
\nonumber\\
&\hspace{-2mm}=& \hspace{-2mm}\tmj+\tmj'.
\label{eqx2}
\end{eqnarray}
}

\vspace{-4mm}
Note that, as per (\ref{eqx1}), the detection of ${\bf x}$ requires an 
estimate of $\tmj$. But the CHEMP receiver uses $\mje$ instead. This, 
as per (\ref{eqx2}), amounts to using an estimate of $\tmj$ with an
estimation error of $\tmj'$.

Assume $N$ and $K$ are large and all the transmitted bits are i.i.d. 
Let $\delta$ prefixed to a variable denote the difference between the
variable computed under estimated CSI (i.e., using $\mje$ and $\hat\vz$) 
and perfect CSI (i.e., using $\mj$ and $\vz$). For example, 
$\delta\mu_i = \hat{\mu}_i-\mu_i$, where $\hat{\mu}_i$ is obtained by 
substituting $\mje$ in place of $\mj$ in (\ref{eq_mu}).
Likewise, $\delta L_i = \hat{L}_i-L_i$, where $\hat{L}_i$ obtained by
substituting $\mje$ and $\hat\vz$ in place of $\mj$ and $\vz$,
respectively, in (\ref{llreq}).

Now, from (\ref{llreq}), we can write the LLR computed by the CHEMP 
receiver as 
\begin{eqnarray}
\hat{L}_i&=&\frac{2\tj_{ii}+2\tj'_{ii}}{\sigma^2_i+\delta
\sigma^2_i}(\hat z_i-\mu_i-\delta \mu_i). 
\end{eqnarray}
Now, $\delta L_i$ is bounded above as
\begin{eqnarray}
\delta L_i&\leq&\frac{2 \tj'_{ii}(\hat z_i-\mu_i-\delta \mu_i)-2\tj_{ii}\delta
\mu_i}{\sigma^2_i}. 
\label{mpa_e}
\end{eqnarray}

By Lemma \ref{lemma1}, we can write the following: 
\begin{eqnarray}
\tj'_{ij|i\neq
j}&\sim&\mathcal{N}\bigg(0,\frac{\sigma^4_v}{NP^4}+\frac{\sigma^2_v}{2NP^2}\bigg), \\
\tj'_{ii}&\sim&\mathcal{N}\bigg(0,\frac{2\sigma^4_v}{NP^4}+\frac{\sigma^2_v}{2NP^2}\bigg), \\
\delta \mu_i&\sim&\mathcal{N}\bigg(0,\frac{\sigma^4_v}{P^4}+\frac{\sigma^2_v}{2P^2}\bigg).
\end{eqnarray}
Without loss of generality, we can assume $P=1$. Therefore, $\E(\delta L_i)=0$, 
and

\vspace{-4mm}
{\small
\begin{eqnarray} 
\E(\delta L_i^2)&\leq&\frac{\sigma^2_v}{\sigma^4_i}\bigg\{\alpha\bigg(\sigma^2_v+
\frac{1}{2}\bigg)+\bigg(\alpha(\sigma^4_v+\frac{\sigma^2_v}{2})+(z_i-\mu_i)^2\bigg)
\nonumber\\
&&\hspace{8mm}.\bigg(\frac{8\sigma^2_v}{N}+\frac{2}{N}\bigg) \bigg\}.
\label{mpa_ub}
\end{eqnarray}
}

\vspace{-4mm}
Note that $\E(\delta L_i^2)$ is the MSD between the Type-2 and Type-3 
LLRs of the MPD.

Next, we do a similar analysis of the MSD of LLRs for the FG-GAI detector. 
Using the definition of the LLRs $\Lambda_i^k$ in the FG-GAI detector as 
given in \cite{jstsp}, the difference in LLR in FG-GAI computed
with MMSE channel estimate and that computed with perfect CSI is bounded 
above as
\begin{equation}
\delta\Lambda_i^k\leq\frac{4H'_{ik}(y_i-\mu_{ik}-\delta
\mu_{ik})-4H_{ik}\delta \mu_{ik}}{\sigma^2_{ik}}, 
\label{fggai_e}
\end{equation}
where the terms $\mu_{ik}$ and $\sigma^2_{ik}$ are as defined in 
\cite{jstsp}, $H'_{ij}$ is the error in estimating $H_{ij}$, and, as 
defined before, $\delta$ prefixed to a variable denotes the difference 
between that variable computed under estimated CSI and perfect CSI. 
The error in the MMSE channel estimate in the FG-GAI receiver is 
\begin{equation}
H'_{ij}=\frac{W_{ij}P-H_{ij}\sigma^2_n}{P^2+\sigma^2_n},
\end{equation}
where $W_{ij}$ is the $(i,j)$th element in matrix ${\bf W}_P$.
The statistics of $H'_{ij}$ are computed by using Lemma \ref{lemma1} 
as follows:
\begin{equation}
\E(H'_{ij})=0, \quad \quad \sigma^2_e\Define\E({H'_{ij}}^{2})=\frac{\sigma^2_n(P^2+\frac{\sigma^2_n}{2})}{(P^2+\sigma^2_n)^2}.
\end{equation}
Without loss of generality, assume $P=1$ and $\alpha=1$.
Now, we have $H'_{ij}\sim\mathcal{N}(0,\sigma^2_e)$ and 
$\delta \mu_{ij}\sim\mathcal{N}(0,N\sigma^2_e)$. By Lemma \ref{lemma1},
we have $\E(\delta\Lambda_i^j=0)$, and
\begin{equation} 
\E((\delta\Lambda_i^j)^2)\leq\frac{16\sigma^2_e}{\sigma^4_{ij}}\bigg(
N\Big(\sigma^2_e+\frac{1}{2}\Big)+(y_i-\mu_{ij})^2\bigg).
\label{fggai_ub}
\end{equation}
The probability of the $i$th symbol is computed using the LLR value 
$L_i^F\Define\sum_{l\neq i}^N\Lambda_l^j$. Therefore, 
$\delta L_i^F=\sum_{l\neq i}^N\delta\Lambda_l^j$, \, $\E(\delta L_i^F)=0$, 
and $\E((\delta L_i^F)^2) = (N-1)\E(\delta\Lambda_i^j)^2$.
It is noted that $\E((\delta L_i^F)^2)$ is the MSD between the Type-2 and 
Type-3 LLRs of the FG-GAI detector.

It can be seen from (\ref{mpa_ub}) and (\ref{fggai_ub}) that the MSD of 
the computed LLR values in each iteration is less in the CHEMP receiver 
compared to that in the FG-GAI receiver. This is further verified by 
simulation in Fig. \ref{simmpmd_mse}, where it can be observed 
that the simulated MSD of the LLRs in the CHEMP receiver is less compared 
to that in the FG-GAI receiver. This makes the proposed CHEMP receiver 
robust to channel estimation errors when compared to the FG-GAI receiver.

\begin{figure}
\includegraphics[width=3.5in,height=2.5in]{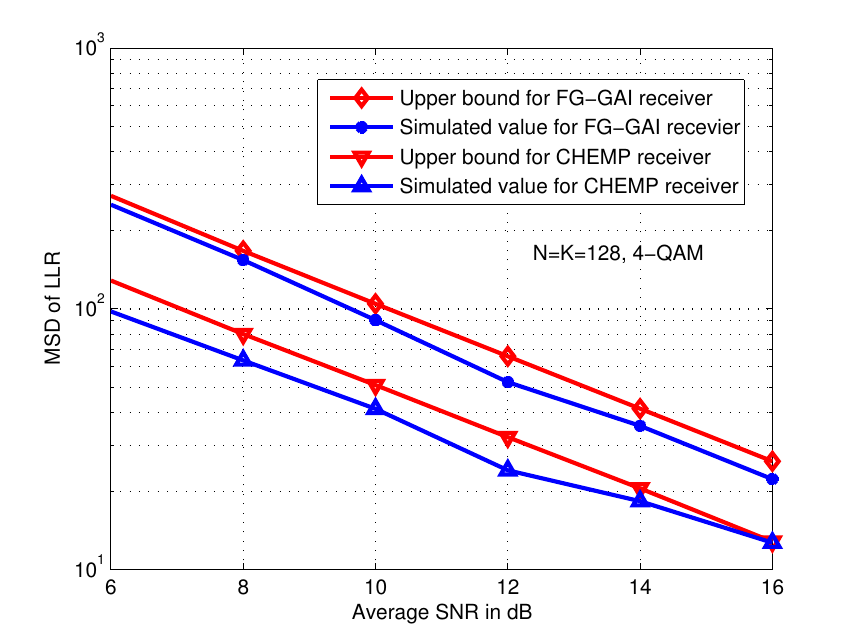} 
\caption{MSD of LLRs in FG-GAI 
and CHEMP receivers for $N=K=128$, 4-QAM.}
\label{simmpmd_mse} 
\end{figure}

\section{Extension to higher-order QAM}
\label{subsec_new}
In this section, we extend the MPD algorithm to higher-order QAM. For
$M$-QAM alphabets, the elements of $\vx$ in (\ref{sys}) belong to the 
underlying PAM alphabet; for example, when the transmitted symbols 
are from 16-QAM alphabet, the elements of $\vx$ are 4-PAM symbols. 
In such a scenario, we compute symbol-wise probability messages in 
the MPD algorithm. Specifically, in each iteration, for each element 
in $\vx$, we compute the probability masses for all symbols in 
$\mathbb B$ as follows.
The means are computed as
\begin{eqnarray}
\mu_i&=&\sum_{j=1,j\neq i}^{2K}J_{ij}\E (x_j)\nonumber\\
&=&\sum_{j=1,j\neq i}^{2K}J_{ij}\sum_{\forall s \in{\mathbb B}}s \ p_j(s).
\end{eqnarray}
The variances are computed as
\begin{eqnarray}
\hspace{-5mm}
\sigma_i^2&=&\sum_{j=1,j\neq i}^{2K}J_{ij}^2\text{Var} (x_j)+\sigma^2_v\nonumber\\
&=&\sum_{j=1,j\neq i}^{2K}J_{ij}^2\Big(\sum_{\forall s\in{\mathbb B}} s^2 p_j(s)-\E(x_j)^2\Big)+\sigma_v^2,
\end{eqnarray}
where $\sigma_v^2$ is as defined in Section \ref{sec3a}.
The probability of $x_i$ being $s \in{\mathbb B}$ is computed as
\begin{equation}
p_i(s)\propto \exp\Big(\frac{-1}{2\sigma^2_i}(z_i-\mu_i-J_{ii}s)^2\Big).
\end{equation}
Finally, the bit probabilities are obtained as
\begin{eqnarray}
\Pr(b_i^p=1)=\sum_{\forall s\in\mathbb{B} : \ p\mbox{\scriptsize{th bit in}}
\hspace{0.5mm} s\hspace{0.5mm} \mbox{\scriptsize{is}} \hspace{0.5mm} 1}p_{i}(s),
\end{eqnarray}
where $b_i^p$ is the $p$th bit in the $i$th user's symbol, which is detected
as 1 if $\Pr(b_i^p=1)\geq 0.5$ and 0 otherwise. It can be
noted that the message passed by each node is a vector of length 
$|\mathbb B|$.

{\em Complexity}:
The complexity of computation of $\vz$ and $\mj$ are $O(NK)$ and $O(NK^2)$,
respectively. The complexity of computing the messages is $O(\sqrt{M}K^2)$ 
for a square $M$-QAM constellation. This is due to the vector nature of the 
messages for $M$-QAM alphabet as opposed to the scalar messages for 
$\{\pm 1\}$ alphabet. In Table \ref{comp2}, we present the complexity for 
16-QAM (in number of real operations) for the proposed MPD, MMSE detector 
and SUMIS detector with $n_s=3$. It can be seen that the complexity of
the proposed MPD is comparable to/less than MMSE complexity and is less 
than SUMIS complexity. In addition, the performance of MPD is better
than those of MMSE and SUMIS detectors as illustrated below.

\begin{table}[h]
\centering
\begin{tabular}{|c||c|c|c|}
\hline
&\multicolumn{3}{|c|}{Complexity in number of real operations $\times 10^{6}$}\\
&\multicolumn{3}{|c|}{$N=128$}\\
\cline{2-4} $K$ & MMSE & MPD & SUMIS\\
 & & (prop) & in \cite{sumis} \\
\hline\hline
16&        0.177& 0.240& 0.483\\ \hline
32&        0.748& 0.964& 1.737\\ \hline
64&        3.593& 3.861& 7.538\\ \hline
96&        9.584& 8.692& 19.368\\ \hline
128&       19.770& 15.456& 39.194\\ \hline
\end{tabular}
\caption{Comparison between the complexities (in number of real operations)
of the proposed MPD, MMSE detection, and SUMIS detection with $n_s=3$ for
16-QAM.}
\label{comp2}
\end{table}

{\em Performance}:
In Fig. \ref{mpd16qam}, we present a comparison between the BER performances 
of the proposed MPD, MMSE detection, and SUMIS detection with $n_s=3$, for
$N=128$, $K=16,32,64$, and 16-QAM. A similar comparison between the proposed
CHEMP receiver, and the MMSE and SUMIS detectors with MMSE channel estimate
is presented in Fig. \ref{chemp16qam}. From these figures, we can see that 
the proposed MPD outperforms the MMSE and SUMIS detectors under perfect
CSI and estimated CSI conditions. 

\begin{figure}[h]
\includegraphics[width=3.6in,height=2.6in]{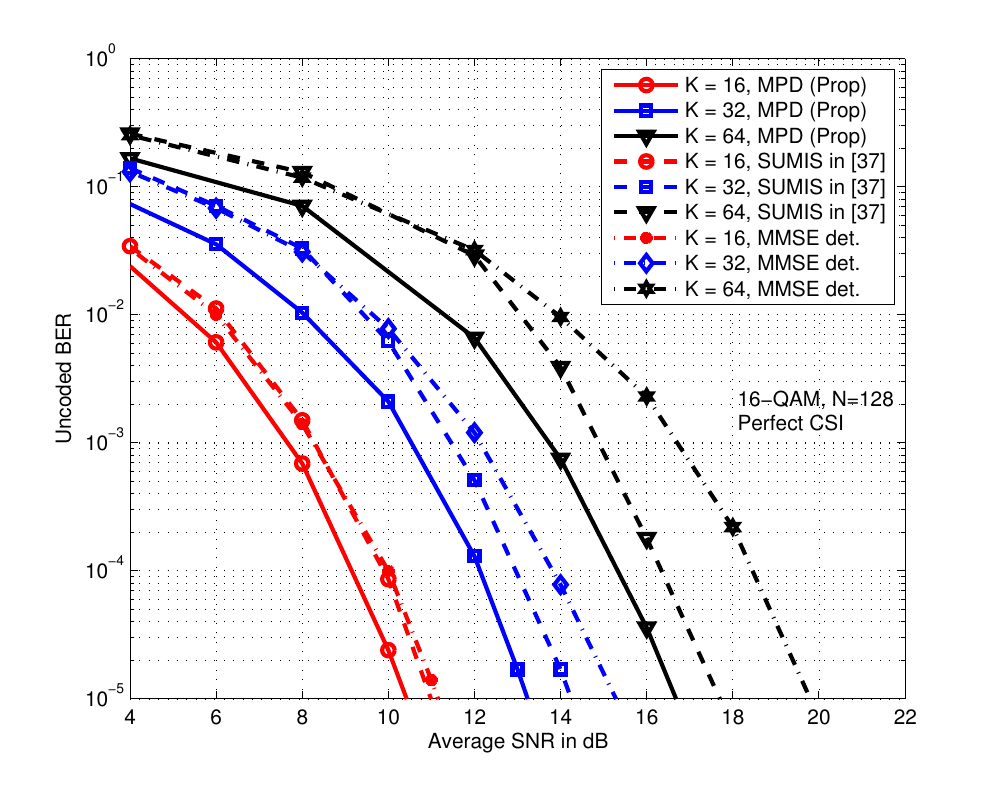}
\vspace{-4mm}
\caption{
Comparison of uncoded BER performance of the proposed MPD,
MMSE detector and SUMIS detector in \cite{sumis} with $n_s=3$
for 16-QAM, $N=128, K=16,32,64$.}
\label{mpd16qam}
\end{figure}

\begin{figure}[h]
\includegraphics[width=3.6in,height=2.6in]{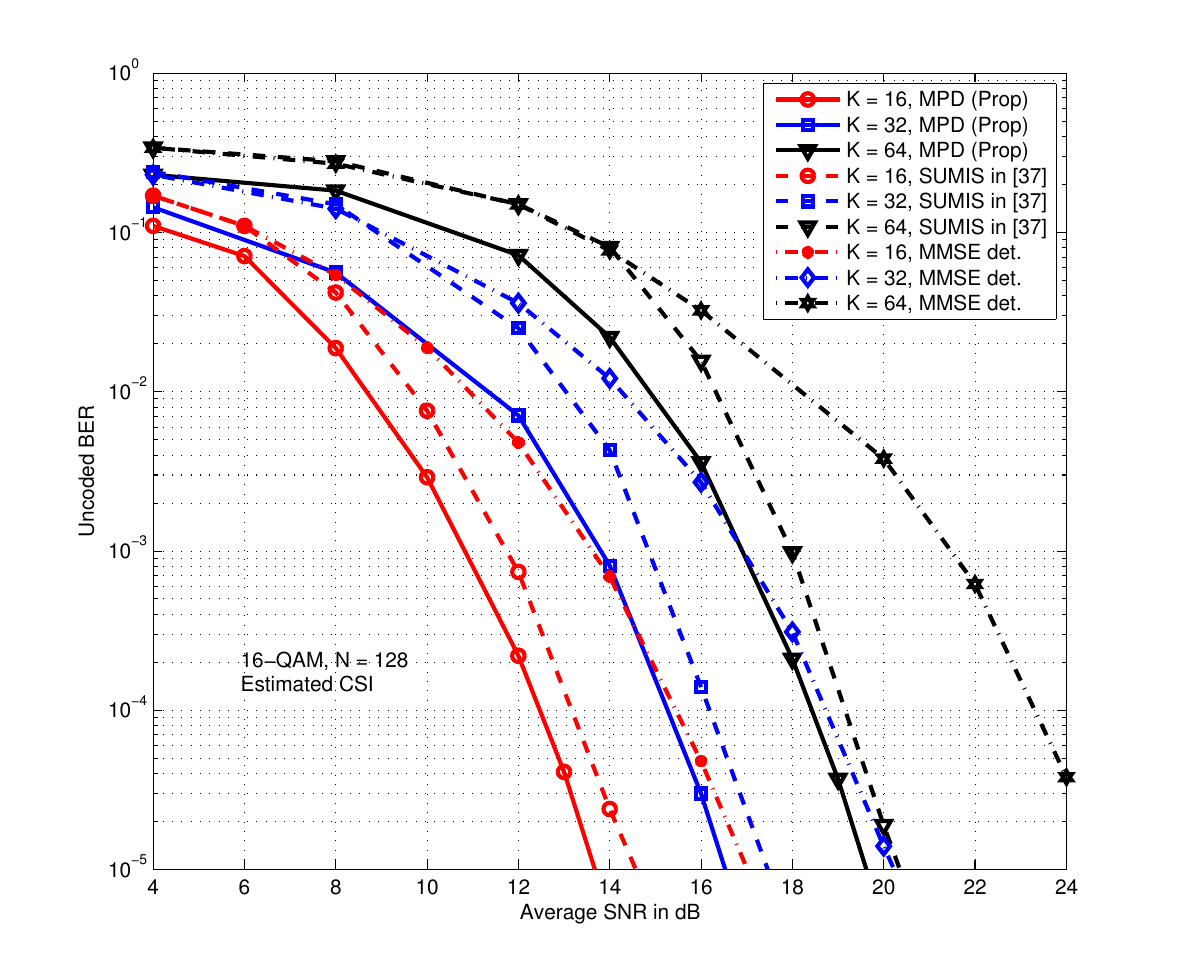}
\vspace{-4mm}
\caption{
Comparison of uncoded BER performance of the proposed CHEMP receiver, 
MMSE and SUMIS detectors with MMSE channel estimate for 16-QAM, $N=128, 
K=16,32,64$.}
\label{chemp16qam}
\end{figure}

\section{Design of LDPC codes for CHEMP receiver}
\label{sec5}
Since both the proposed CHEMP receiver and the LDPC decoder employ message 
passing, a detection-decoding approach based on message passing on a joint 
graph can be natural. In this section, we present a joint graph for the LDPC 
coded system model. We perform MPD and LDPC decoding by passing messages on 
the joint graph. We design optimized irregular LDPC codes specific to the 
considered large MIMO channel and the CHEMP receiver through EXIT chart 
matching. We also present the coded BER performance of the LDPC codes thus 
obtained.

When the detection and decoding operations are performed jointly, the 
receiver starts the detection-decoding process after receiving $n$ coded 
bits. In the joint 
detection-decoding approach, we marginalize the joint probability of the 
received coded symbols. The objective is to compute
\begin{eqnarray}
\Pr ({\bf x}\mid C,{\bf y}) & \propto & \Pr({\bf x},C,{\bf y}) \nonumber \\
 & = & \Pr(C\mid\bf{x})\Pr({\bf y \mid x})\Pr({\bf x}), 
\label{prob-eq}
\end{eqnarray}
where
\begin{equation}
\Pr(C\mid {\bf x})=\prod_{l=1}^{n-k}\Pr(C_{l}\mid {\bf x}),
\end{equation}
$C_l$ is the event of the $l$th check equation of the LDPC code being 
satisfied, and $C$ is the event of all $n-k$ check equations of the LDPC 
code being satisfied. We formulate a graph whose joint probability 
factorizes according to (\ref{prob-eq}), and that upon marginalization 
gives the probability of the transmitted symbols. 

\begin{figure*}
\hspace{2mm}
\includegraphics[width=7.0in,height=2in]{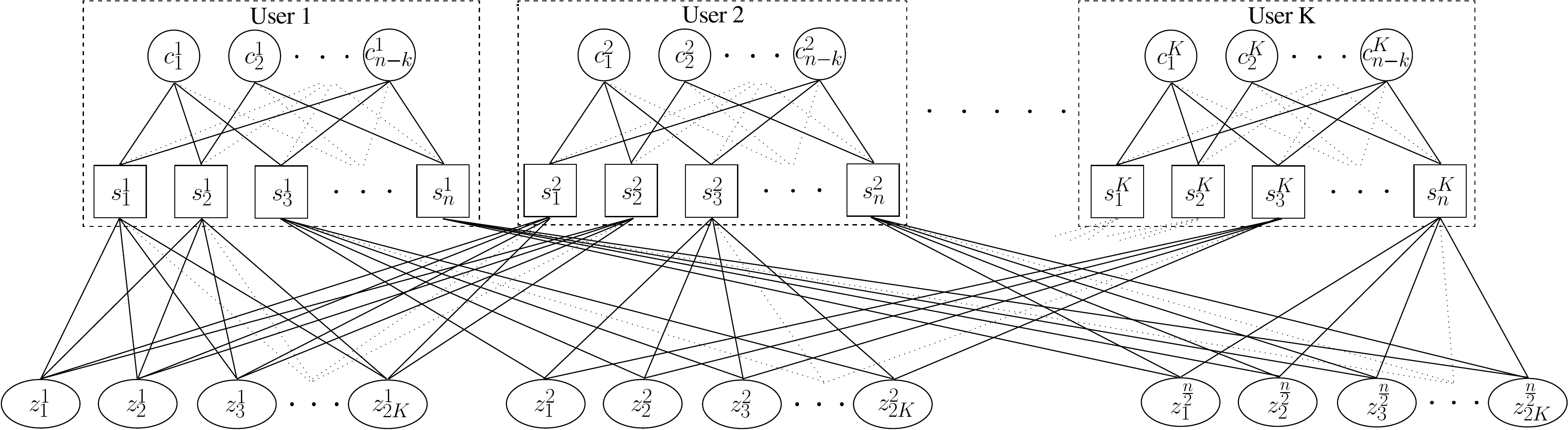} 
\caption{The joint graph of the LDPC coded large-scale MIMO system.}
\label{mpmdldpc_graph} 
\end{figure*}

\subsection{Joint detector and decoder}
\label{sec5a}
Figure \ref{mpmdldpc_graph} shows the joint graph for the LDPC coded 
large-scale MIMO system with 4-QAM. The joint graph consists of three 
sets of nodes, 
namely, variable nodes set, observation nodes set, and check nodes set. 
The $nK$ observation nodes correspond to the elements of the $\vz$ vectors, 
the $nK$ variable nodes correspond to the transmitted coded symbols over 
$\frac{n}{2}$ channel uses, and $(n-k)K$ check nodes correspond to the 
check equations of the LDPC code (see Fig. \ref{mpmdldpc_graph}). 

Let $i\in\{1,\cdots,2K\}$, $j\in\{1,\cdots,K\}$, $m\in\{1,\cdots,n\}$, 
$m'\in\{1,\cdots,\frac{n}{2}\}$, and $l\in\{1,\cdots,n-k\}$. Now, the 
different messages passed over the graph are:
\begin{itemize}
\item {\em Observation node $z_i^{m'}$ to variable node $s_m^j$}:\\
These messages correspond to the probabilities $\Pr(x_i^{m'}=+1)$, the 
probability of the $i$th bit transmitted at the $m'=\lceil\frac{m}{2}\rceil$th 
channel use, i.e., for a given $m'$, $m\in\{2m'-1,2m'\}$.
\item {\em Variable node $s_m^j$ to check node $c_l^j$}:\\
These messages correspond to the probabilities $\Pr(b_m^j=+1)$, the 
probability of the $m$th bit in the LDPC code block transmitted by the 
$j$th user. $l\in\mathcal{N}(s_m^j)$, where $\mathcal{N}(s_m^j)$ is the 
neighborhood of $s_m^j$, i.e., the set of all check nodes connected to 
$s_m^j$. 
\item {\em Check node $c_l^j$ to variable node $s_m^j$}:\\
These messages correspond to the probabilities 
$\Pr(C_l^j\mid s_r^j, \forall r \in \mathcal{N}(c_l^j)\setminus s_m^j)$,
where $\mathcal{N}(c_l^j)$ is the neighborhood of $c_l^j$, i.e., the set of 
all variable nodes connected to $c_l^j$. This corresponds to the probability 
of the $l$th check equation of the LDPC code block transmitted by the $j$th 
user to be satisfied.
\item {\em Variable node $s_m^j$ to observation node $z_i^{m'}$}:\\
These messages correspond to the probabilities 
$\Pr(x_i^{m'}=+1\mid C_r^j, x_u^m, \, \forall r \in \mathcal{N}(s_m^j),
\ u\in\{1,\cdots,2K\}\setminus i )$,
\end{itemize}
It should be noted that, due to the way messages are defined in the MPD of
the CHEMP receiver, there is no message sent from the observation node 
$z_i^{m'}$ to the variable node $s_{2m'-1}^{i}$ when $1\leq i\leq K$, and
there is no message sent from the observation node 
$z_i^{m'}$ to the variable node $s_{2m'}^{i}$ when $K+1\leq i\leq 2K$.
Similarly, the variable node $s_m^j$ sends no message to any observation 
node except $z_{j}^{m'}$ and $z_{2j}^{m'}$. The iterations are continued 
till all the LDPC check equations are satisfied by the estimated bits 
or a certain number of iterations are completed.

\subsection{Design of LDPC codes for the joint detector-decoder}
\label{sec5a}
We obtain the behavior of the proposed joint detector-decoder through 
EXIT curve analysis \cite{tenbrink}. The EXIT function is $f(I_A)=I_E$, 
where $I_E$ is the average mutual information between the coded bits 
and the extrinsic output for a given value of $I_A$, where $I_A$ is the 
average mutual information between the coded bits and the input a priori 
information. First, we obtain the EXIT curves of the CHEMP receiver and 
combine it with that of the LDPC decoder to obtain the EXIT characteristics 
of the joint detector-decoder.

The EXIT characteristics of the CHEMP receiver is obtained through Monte 
Carlo simulations, as an analytical evaluation is intractable. We combine 
the CHEMP receiver's EXIT curves with those of the LDPC decoder, whose 
EXIT curves have known closed-form expressions \cite{tenbrink1}. 
Figure \ref{mpde} shows the EXIT curves of the proposed MPD detector 
and that of the combination of the MPD detector and the variable nodes 
of the LDPC decoder for 4-QAM, $N=128$ and $K=32,128$. We know 
that to approach the capacity of the channel using LDPC codes, we need to 
match the EXIT curves of the check nodes set and the variable nodes set 
\cite{match}, by finding an appropriate degree distribution of the variable 
nodes and the check nodes that is specific for a channel and receiver. Using 
the evaluated EXIT curves and the method detailed in \cite{comm}, we obtain 
the degree distribution of irregular LDPC codes specific for the large-scale 
MIMO channel and the proposed CHEMP receiver. The LDPC codes thus obtained 
for various system parameter settings are presented in Table \ref{tab1}.

\begin{figure}[h]
\centering
\includegraphics[width=3.35in,height=2.5in]{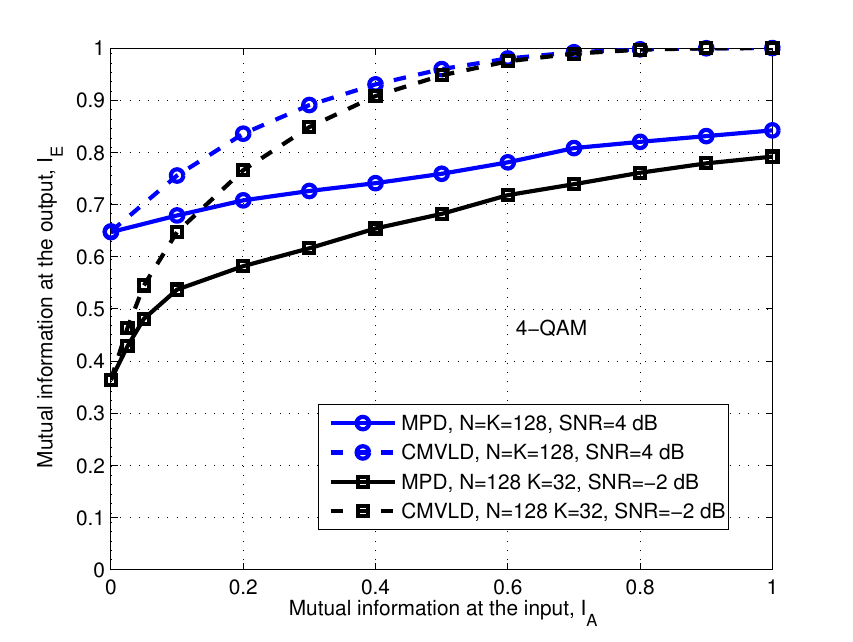}
\caption{EXIT curves of 1) proposed MPD, and 2) combination
of MPD and variable nodes of the LDPC decoder (CMVLD).}
\label{mpde}
\end{figure}

\begin{table}
\centering
\begin{tabular}{|c||c|c|}
\hline
Parameters & ($d_v$, $p_v$) & ($d_c$, $p_c$)\\
\hline \hline
$N=128$, & (2,0.3723), (4, 0.2798), & (6, 0.7067), (12, 0.2531),\\
$\alpha=1$ & (5, 0.2254), (8,0.1152), & (18, 0.0402)\\ 
&(12, 0.0073) &\\ \hline
$N=128$ & (2,0.5715), (4,0.3132), & (4, 0.7045), (8, 0.091) \\
$\alpha=0.5$ & (5, 0.1061), (8, 0.0091) & (12, 0.2045)\\ \hline
$N=128$ & (2,0.4794), (4,0.4201), & (6, 0.7599), (12, 0.1003) \\
$\alpha=0.125$ & (8, 0.0309), (16, 0.0696) & (16, 0.1398)\\
\hline
\end{tabular}
\caption{Degree profiles of optimized rate-1/2 LDPC codes for different 
large MIMO configurations. $p_v$, $p_c$: fraction of variable nodes of 
degree $d_v$ and check nodes of degree $d_c$. 
}
\label{tab1}
\vspace{-0mm}
\end{table}

\subsection{Coded BER performance}
\label{sec5c}
We evaluated the coded BER performance of the joint detector-decoder by 
combining the CHEMP receiver and the LDPC decoder, for $N=128$ and 
$K=16,32,64,96,128$. Figure \ref{simmpmdldpc_N} shows the coded BER 
performance of the optimized LDPC codes for the cases with 1) perfect 
channel knowledge and 2) estimated channel knowledge (i.e., estimated 
$\mh^T\mh$), for $N=K=128$. The minimum SNR required to 
achieve capacity is also marked. The rate of the LDPC code is 
1/2 and the LDPC code block length is $n=$4000. It can be 
seen that the optimized LDPC code performs close to within about 3 dB 
from capacity. We also compare the performance of the optimized codes 
with that of an off-the-shelf irregular LDPC code from \cite{shelf}. 
From Fig. \ref{simmpmdldpc_N}, we can see that the optimized LDPC code 
with perfect channel knowledge performs better than the off-the-shelf 
LDPC code by about 1.2 dB at $10^{-5}$ coded BER. Likewise, the optimized 
LDPC code with estimated channel knowledge outperforms the off-the-shelf 
LDPC code by about 0.8 dB. 

In Fig. \ref{simmpmdldpc_lf}, we plot the average SNRs required to achieve 
a coded BER of $10^{-4}$ by the optimized LDPC codes with estimated channel 
knowledge and perfect channel knowledge, as a function of the system loading 
factor $\alpha$. From Fig. \ref{simmpmdldpc_lf}, we observe that the 
optimized LDPC code with perfect channel knowledge performs better than the 
off-the-shelf LDPC code in \cite{shelf} by about $1.2$ dB at $\alpha=1$, and 
$0.3$ dB at $\alpha=0.125$. Likewise, the optimized LDPC code with the 
estimated channel outperforms the off-the-shelf LDPC code by about $0.7$ dB 
at $\alpha=1$, and $0.5$ dB at $\alpha=0.125$. This performance improvement 
is due to the LDPC code optimization through EXIT curve matching and joint
detection-decoding.

In Fig. \ref{fig_ldpc_new}, we show a performance comparison between the 
proposed optimized code and the codes in \cite{paul} and in the WiMax 
standard \cite{wimax}, in a system with $N=K=128$, 4-QAM, $n=11520$, 
rate-1/2, and perfect CSI. At a block length of $n=11520$, the proposed 
optimized code is found to perform close to within about 2.2 dB from 
capacity. Also, the optimized code is found to perform better than the 
codes in \cite{paul} and \cite{wimax} by about 2 dB and 2.5 dB, 
respectively, at $10^{-5}$ coded BER.

\begin{figure}
\includegraphics[width=3.6in,height=2.6in]{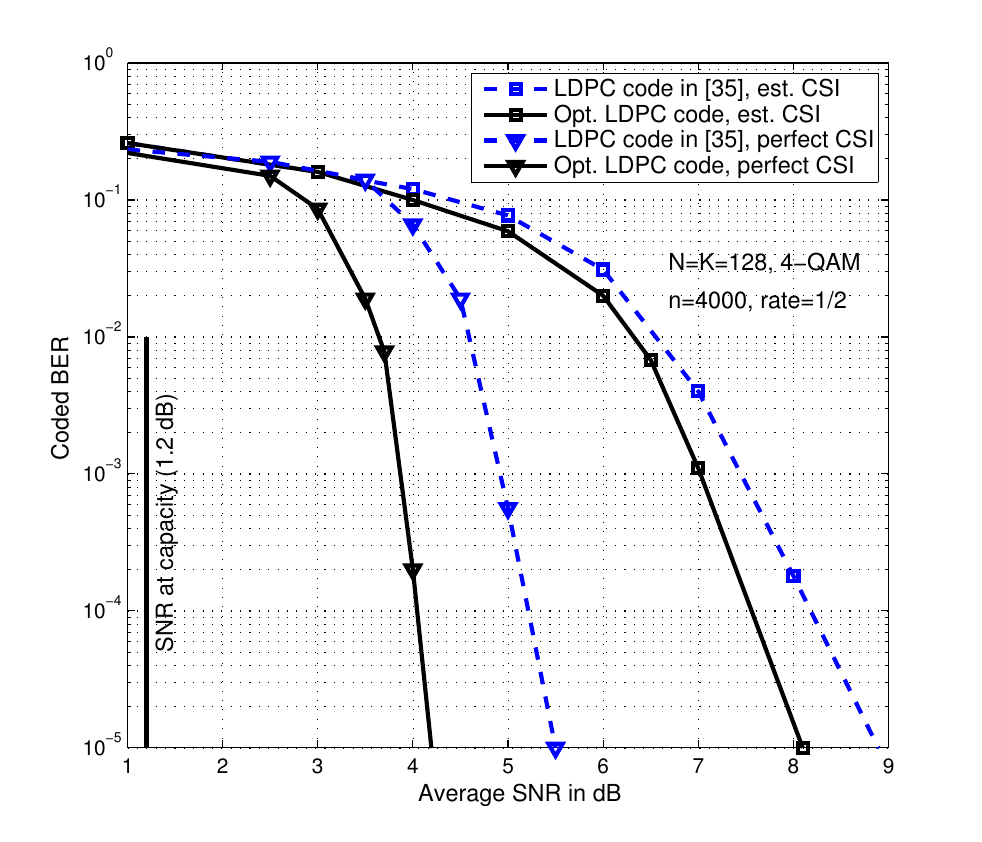} 
\vspace{-4mm}
\caption{Coded BER performance of the irregular LDPC codes 
optimized for the joint detector-decoder with 1) perfect channel knowledge 
and 2) estimated channel knowledge (i.e., estimated $\mh^T\mh$), for 
$N=K=128$, 4-QAM, $n=4000$, rate-1/2.}
\label{simmpmdldpc_N} 
\end{figure}

\begin{figure}
\includegraphics[width=3.6in,height=2.6in]{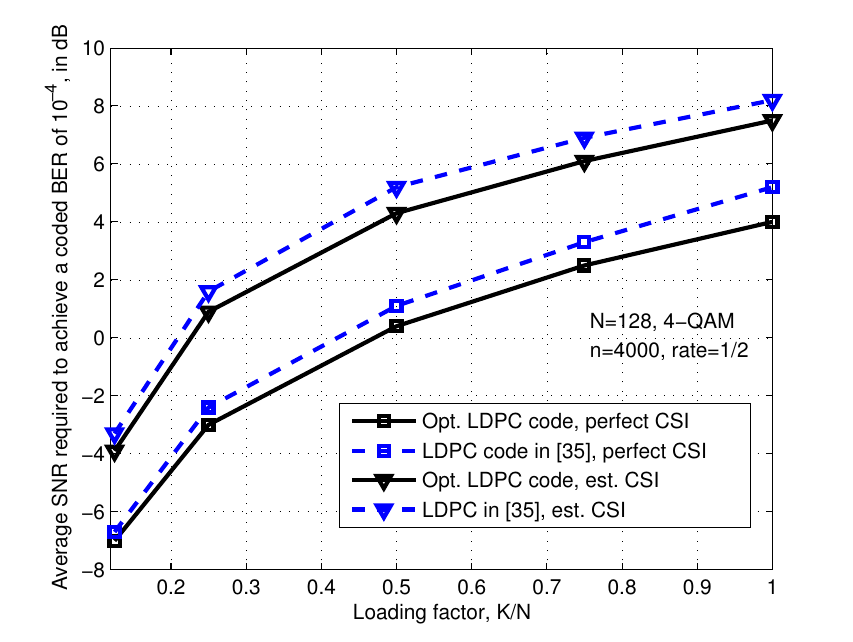} 
\vspace{-4mm}
\caption{Comparison of the average SNR required to achieve a coded BER of
$10^{-4}$ by the joint detector-decoder with 1) perfect channel knowledge
and 2) estimated channel knowledge (i.e., estimated $\mh^T\mh$), for 
various loading factors with $N=128$, 4-QAM, $n=4000$, rate-1/2.}
\label{simmpmdldpc_lf} 
\end{figure}

\begin{figure}
\centering
\includegraphics[width=3.6in,height=2.6in]{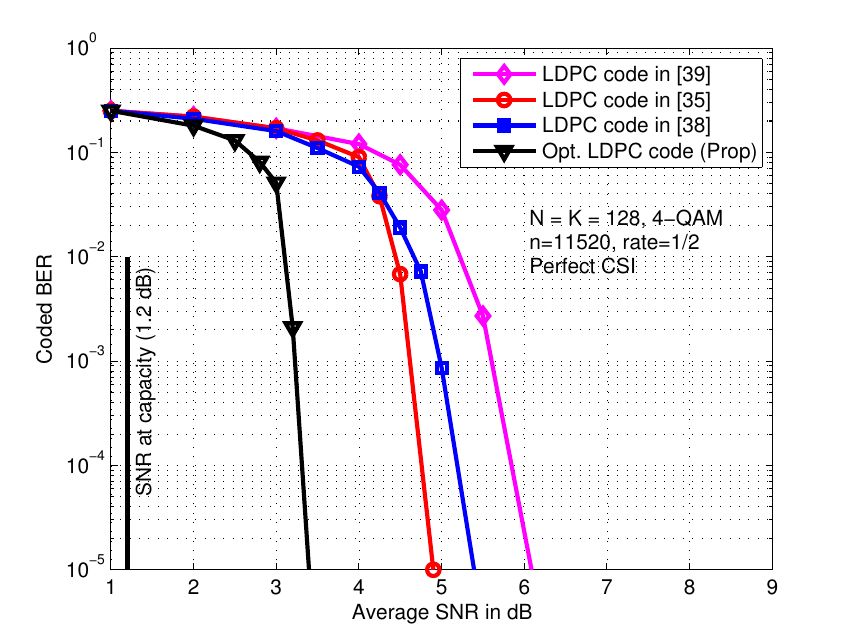} 
\vspace{-4mm}
\caption{Coded BER performance comparison between the optimized 
LDPC code and other LDPC codes in \cite{paul} and in WiMax standard 
\cite{wimax}. $N=K=128$, 4-QAM, $n=11520$, rate-1/2, perfect CSI.}
\label{fig_ldpc_new} 
\end{figure}

\section{Conclusions}
\label{sec6}
We proposed a promising message passing based receiver (referred to as the
`CHEMP receiver') for low complexity detection and channel estimation in 
large-scale MIMO systems. The proposed CHEMP receiver is simple and novel 
(leading to low complexity), yet very effective in large dimensions (leading 
to near-optimal performance). The key idea is a novel way of exploiting the 
channel hardening effect that happens in large MIMO channels. Specifically, 
the receiver worked with approximations to the off-diagonal terms of the 
$\mh^T\mh$ matrix, and directly obtained and used an estimate of $\mh^T\mh$ 
(instead of an estimate of $\mh$). For the considered large-scale MIMO 
settings, the proposed message passing detection algorithm has almost the 
same or less complexity compared to MMSE detection complexity (since the 
proposed detection algorithm does not need a matrix inversion). Yet, it 
could achieve much better performance compared to MMSE detection performance. 
The proposed CHEMP receiver outperformed MMSE and other message passing 
receivers using an MMSE estimate of $\mh$. We presented an analysis of the 
convergence of the proposed detection algorithm and a mean square difference
analysis of the LLRs in proposed receiver with perfect and estimated CSI. 
The irregular LDPC codes obtained for the 
considered large MIMO channel and the proposed CHEMP receiver through EXIT 
chart matching achieved better coded BER performance compared to off-the-shelf 
irregular LDPC codes. Stronger conditions for convergence 
compared to the condition in (\ref{condition}) and convergence analysis 
for the case of estimated channel knowledge are potential topics for 
future research. Extension of the proposed receiver approach to 
frequency-selective channels can also be carried out as future extension 
to this work. 

\bibliographystyle{ieeetr} 

\end{document}